\newtheorem{thm}{Theorem}[section]
\newtheorem{cor}[thm]{Corollary}
\newtheorem{lem}[thm]{Lemma}
\newtheorem{rem}[thm]{Remark}
\newtheorem{defn}[thm]{Definition}
\numberwithin{equation}{section}
\newcommand{\Tr}{\operatorname{Tr}}
\begin{document}

\begin{frontmatter}



\title{Nearly optimal codebooks based on generalized Jacobi sums}

\author{Ziling Heng}
\ead{zilingheng@163.com, hengziling@ust.hk}
\address{Department of Computer Science and Engineering,
The Hong Kong University of Science and Technology,
Clear Water Bay, Kowloon, Hong Kong, China}

\begin{abstract}
Codebooks with small inner-product correlation are applied in many practical applications including direct spread code division multiple access (CDMA) communications, space-time codes and compressed sensing. It is extremely difficult to construct codebooks achieving the Welch bound or the Levenshtein bound. Constructing nearly optimal codebooks such that the ratio of its maximum cross-correlation amplitude to the corresponding bound approaches 1 is also an interesting research topic.  In this paper, we firstly study a family of interesting character sums called generalized Jacobi sums over finite fields. Then we apply the generalized Jacobi sums and their related character sums to obtain two infinite classes of  nearly optimal codebooks with respect to the Welch or Levenshtein bound. The codebooks can be viewed as generalizations of some known ones and contain new ones with very flexible parameters.
\end{abstract}

\begin{keyword}
Code division multiple access, codebooks, signal sets,  compressed sensing, Welch bound, Levenshtein bound.

\MSC  11L03 \sep 68P30 \sep 94A05

\end{keyword}

\end{frontmatter}

\section{Introduction}
 Codebooks(also called signal sets) with small inner-product correlation are usually used to distinguish among the signals of different users in CDMA systems. An $(N,K)$ codebook $\mathcal{C}$ is a set $\{\textbf{c}_{0},\textbf{c}_{1},...,\textbf{c}_{N-1}\}$, where each codeword $\textbf{c}_{l},0\leq l \leq N-1$, is a unit norm $1\times K$ complex vector over an alphabet $A$. The alphabet size is the number of elements in $A$. The maximum cross-correlation amplitude, which is a very important performance measure of a codebook in practical applications, of an $(N,K)$ codebook $\mathcal{C}$ is defined by
\begin{eqnarray*}
I_{\max}(\mathcal{C})= \max\limits_{0 \leq i<j \leq N-1} \mid \textbf{c}_{i}\textbf{c}_{j}^{H} \mid,
\end{eqnarray*}
where $\textbf{c}^{H}$ denotes the conjugate transpose of a complex vector $\textbf{c}$.  Minimizing the  maximal cross-correlation
amplitude of a codebook is an important problem as it can approximately optimize many performance
metrics such as outage probability, average signal-to-noise ratio and symbol error probability for multiple-antenna transmit beamforming from limited-rate feedback \cite{LHS, MSEA}. Besides, minimizing $I_{\max}(\mathcal{C})$  is equivalent to minimizing
the block error probability in the context of unitary space-time modulations \cite{HMRSU}. If $N\geq K$, codebooks are also called \emph{frames} and a codebook
$\mathcal{C}$ with minimal $I_{\max}(\mathcal{C})$ is referred to as a \emph{Grassmannian frame}.
Furthermore, minimizing $I_{\max}(\mathcal{C})$ of finite frames brings to minimal reconstruction error in multiple description coding over erasure channels \cite{SH}.

 For a given $K$, it is desirable to construct an $(N,K)$ codebook with $N$ being as large as possible and $I_{\max}(\mathcal{C})$ being as small as possible simultaneously. There exist some bounds among the parameters $N$, $K$ and $I_{\max}(\mathcal{C})$ of a codebook $\mathcal{C}$.

 The Welch bound is given as follows.
\begin{lem}[Welch bound]\label{lem-I.1} \cite{W}
For any $(N,K)$ codebook $\mathcal{C}$ with $N\geq K$,
\begin{eqnarray}\label{eqn-I.1}
I_{\max}(\mathcal{C})\geq \sqrt{\frac{N-K}{(N-1)K}}.
\end{eqnarray}
In addition, the equality in (\ref{lem-I.1}) is achieved if and only if
\begin{eqnarray*}|\textbf{c}_{i}\textbf{c}_{j}^{H}|=\sqrt{\frac{N-K}{(N-1)K}}
\end{eqnarray*}
for all pairs $(i,j)$ with $i\neq j$.
\end{lem}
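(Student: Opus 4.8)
The plan is to pass from the codebook to its Gram matrix and extract the inequality from a comparison of its first two trace moments. Concretely, I would form the $N \times N$ Hermitian matrix $G = (\mathbf{c}_i \mathbf{c}_j^H)_{0 \le i,j \le N-1}$. Writing $M$ for the $N \times K$ matrix whose rows are the codewords, we have $G = M M^H$, so $G$ is positive semidefinite with $\rank(G) \le K$; moreover each codeword has unit norm, so $G_{ii} = \mathbf{c}_i \mathbf{c}_i^H = 1$ and hence $\Tr(G) = N$.

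Next I would compare $\Tr(G)$ with $\Tr(G^2)$. Writing $\lambda_1, \dots, \lambda_r$ for the nonzero eigenvalues of $G$ (so $r = \rank(G) \le K$ and each $\lambda_t \ge 0$), the Cauchy--Schwarz inequality gives
\[
N^2 = \Tr(G)^2 = \Big(\sum_{t=1}^{r} \lambda_t\Big)^2 \le r \sum_{t=1}^{r} \lambda_t^2 \le K\,\Tr(G^2).
\]
On the other hand, since $G$ is Hermitian,
\[
\Tr(G^2) = \sum_{i,j} |G_{ij}|^2 = \sum_i |G_{ii}|^2 + \sum_{i \ne j} |\mathbf{c}_i \mathbf{c}_j^H|^2 \le N + N(N-1)\,I_{\max}(\mathcal{C})^2,
\]
because there are $N(N-1)$ off-diagonal terms, each bounded by $I_{\max}(\mathcal{C})^2$. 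Combining the two displays and dividing by $N$ yields $N \le K\big(1 + (N-1) I_{\max}(\mathcal{C})^2\big)$, and rearranging gives exactly $I_{\max}(\mathcal{C})^2 \ge \frac{N-K}{(N-1)K}$.

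For the equality condition I would argue in both directions. If $I_{\max}(\mathcal{C})$ equals the bound, then both inequalities above must be tight; in particular tightness of the bound on the off-diagonal sum forces $|\mathbf{c}_i \mathbf{c}_j^H| = I_{\max}(\mathcal{C}) = \sqrt{(N-K)/((N-1)K)}$ for every pair $i \ne j$. Conversely, if all off-diagonal magnitudes equal $\sqrt{(N-K)/((N-1)K)}$, then $I_{\max}(\mathcal{C})$ equals this value by definition, and a direct substitution into $\Tr(G^2)$ shows $K\,\Tr(G^2) = N^2$, so the Cauchy--Schwarz step is automatically tight and the bound is attained.

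The algebra here is routine; the one genuine idea is recognizing that the rank constraint $\rank(G) \le K$ is what converts the trace identities into a bound involving $K$. The step requiring the most care is the Cauchy--Schwarz inequality applied to the eigenvalues, where one must use $r \le K$ together with the positive semidefiniteness of $G$ (so that $\sum_t \lambda_t^2 \ge 0$) to legitimately replace $r$ by $K$ in the final estimate.
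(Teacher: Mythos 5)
Your proof is correct. A point of orientation first: the paper does not prove this lemma at all --- it is quoted directly from Welch's original paper \cite{W} as a known result --- so there is no internal proof to compare against; what you have reconstructed is the standard Gram-matrix argument, and you have executed it soundly. The key steps all check out: $G = MM^H$ is Hermitian positive semidefinite with $\rank(G)\le K$ (using $N\ge K$), $\Tr(G)=N$ from the unit-norm condition, the Cauchy--Schwarz step $\left(\sum_{t=1}^{r}\lambda_t\right)^2\le r\sum_{t=1}^{r}\lambda_t^2\le K\,\Tr(G^2)$ is legitimate since $r\le K$, and the Frobenius identity $\Tr(G^2)=\sum_{i,j}|G_{ij}|^2$ converts the spectral bound into the correlation bound. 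Your equality analysis is also right: when $I_{\max}(\mathcal{C})$ meets the bound, the chain $N^2\le K\,\Tr(G^2)\le KN+KN(N-1)I_{\max}(\mathcal{C})^2=N^2$ collapses, and tightness of the last inequality forces every off-diagonal magnitude to equal $I_{\max}(\mathcal{C})$; the converse direction is, as you effectively note, immediate from the definition of $I_{\max}(\mathcal{C})$ as a maximum (your extra verification that $K\,\Tr(G^2)=N^2$ under that hypothesis is harmless but not needed). One small presentational remark: the positive semidefiniteness of $G$ is not actually required for the Cauchy--Schwarz step itself (which holds for any real eigenvalues); what it and the rank bound really buy you is the replacement of $r$ by $K$, exactly as you identify in your closing comment.
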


If a codebook $\mathcal{C}$ achieves the Welch bound in (\ref{lem-I.1}), which is denoted by $I_{W}$, we call it a maximum-Welch-bound-equality (MWBE) codebook \cite{XZG}. An MWBE codebook is referred to as an
equiangular tight frame \cite{TKK}. The reader is referred to \cite{DHC, STDH} for the connection of MWBE codebooks and equiangular tight frames.
MWBE codebooks are also applied in many practical  applications including CDMA communications, space-time codes and compressed sensing \cite{MM, TK, TKK}. To our knowledge,
only a few constructions of MWBE codebooks were reported in literature. We list them in the following.
\begin{enumerate}
\item[(1)] In \cite{S,XZG}, optimal $(N,N)$ and $(N,N-1)$ codebooks with $N>1$ were generated based on the (inverse) discrete Fourier transform matrix or ideal two-level autocorrelation sequences. Specific constructions of optimal $(N,N-1)$ codebooks were also given in \cite{GK, DO}. In fact, optimal $(N,N)$ codebooks are the same as orthonormal bases.
\item[(2)] In \cite{CHS,SH}, optimal $(N,K)$ codebooks from conference matrices were given when $N=2K=2^{d+1}$ with $d$ being a positive integer and $N=2K=p^{d}+1$ with $p$ being a prime number and $d$ being a positive integer.
\item[(3)] In \cite{D,DF2,XZG}, optimal $(N,K)$ codebooks were constructed with cyclic difference sets in the Abelian group $(\mathbb{Z}_{N},+)$ or the additive group of finite fields or Abelian groups in general.
\item[(4)] In \cite{FMT}, the authors constructed optimal $(N,K)$ codebooks from $(2,k,v)$-Steiner systems.
\item[(5)] In \cite{FMJ162, Rahimi}, graph theory and finite geometries were applied to
study MWBE codebooks.
\end{enumerate}

According to \cite{SH}, the Welch bound on $I_{\max}(\mathcal{C})$ of a codebook  $\mathcal{C}$ is not tight when $N>K(K+1)/2$ for real codebooks and $N>K^{2}$ for all codebooks. The following Levenshtein bound turns  out to be tighter than the Welch bound in these cases.
\begin{lem}[Levenshtein bound]\label{lem-I.2} \cite{L}
For any real-valued codebook $\mathcal{C}$ with $N>K(K+1)/2$,
\begin{eqnarray}\label{eqn-I.2}
I_{\max}(\mathcal{C})\geq \sqrt{\frac{3N-K^{2}-2K}{(N-K)(K+2)}}.
\end{eqnarray}
For any complex-valued codebook $\mathcal{C}$ with $N>K^{2}$,
\begin{eqnarray}\label{eqn-I.3}
I_{\max}(\mathcal{C})\geq \sqrt{\frac{2N-K^{2}-K}{(N-K)(K+1)}}.
\end{eqnarray}
\end{lem}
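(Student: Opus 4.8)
Both inequalities can be obtained by the Delsarte--Levenshtein linear programming method, run with a single degree-two test polynomial; the real and complex cases differ only in the underlying measure and hence in one moment constant. Write $s = I_{\max}(\mathcal{C})$ and, in the complex case, set $u_{ij} = |\mathbf{c}_i\mathbf{c}_j^H|^2$, so that $u_{ii}=1$ and $u_{ij}\in[0,s^2]$ for $i\ne j$. The engine of the proof is a pair of moment inequalities. The first is exactly the computation behind the Welch bound (Lemma \ref{lem-I.1}): since the frame operator $\sum_i \mathbf{c}_i^H\mathbf{c}_i$ has trace $N$ and rank at most $K$, Cauchy--Schwarz gives $\sum_{i,j}u_{ij}\ge N^2/K$. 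The second is its fourth-order analogue, obtained by applying the same trace/Cauchy--Schwarz argument to the unit symmetric tensor squares $\mathbf{c}_i^{\otimes 2}$ (complex), which live in a space of dimension $\binom{K+1}{2}$, and to the traceless symmetric dyads $\mathbf{c}_i^H\mathbf{c}_i-\frac{1}{K}I$ together with the first inequality (real); this yields $\sum_{i,j}u_{ij}^{2}\ge 2N^{2}/(K(K+1))$ in the complex case and $\sum_{i,j}(\mathbf{c}_i\mathbf{c}_j^H)^{4}\ge 3N^{2}/(K(K+2))$ in the real case. Equivalently, these two facts assert that the linear and quadratic Jacobi (complex) or Gegenbauer (real) polynomials attached to the relevant sphere are positive definite in Schoenberg's sense.

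With these in hand, the plan is to feed the test polynomial $f(u)=u^2-s^2u=u(u-s^2)$ into the linear program. By construction $f(u)\le 0$ on the whole off-diagonal range $[0,s^2]$, and expanding $f$ in the orthogonal-polynomial basis shows that its constant coefficient is $f_0=2/(K(K+1))-s^2/K$, that the quadratic coefficient is $1$, and that the linear coefficient is nonnegative precisely when $s^2\le 4/(K+2)$. Summing $f(u_{ij})$ over all ordered pairs, splitting off the $N$ diagonal terms and using positive definiteness to discard the higher pieces, one obtains
\[
N^2 f_0 \le \sum_{i,j} f(u_{ij}) = N f(1) + \sum_{i\ne j} f(u_{ij}) \le N f(1) = N(1-s^2),
\]
so that $N \le f(1)/f_0 = K(K+1)(1-s^2)/(2-(K+1)s^2)$. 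Solving this inequality for $s^2$ gives exactly (\ref{eqn-I.3}). The real case is identical after replacing $u_{ij}$ by $(\mathbf{c}_i\mathbf{c}_j^H)^2$, the constant $2/(K(K+1))$ by $3/(K(K+2))$, and the threshold $4/(K+2)$ by $6/(K+4)$; it yields $N\le K(K+2)(1-s^2)/(3-(K+2)s^2)$, which rearranges to (\ref{eqn-I.2}).

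The step I expect to require the most care is the bookkeeping around the fourth-order moment inequality and the side conditions that make the linear program legitimate. Concretely, one must (i) pin down the correct dimension in the Cauchy--Schwarz step so that the constant is $2/(K(K+1))$ over $\mathbb{C}$ but the sharper $3/(K(K+2))$ over $\mathbb{R}$ --- this discrepancy is exactly why the two stated bounds differ --- and (ii) verify both the coefficient condition $s^2\le 4/(K+2)$ (resp. $6/(K+4)$) and the positivity $f_0>0$, i.e. $s^2<2/(K+1)$ (resp. $s^2<3/(K+2)$), in the regime of interest. This is where the hypotheses $N>K^2$ and $N>K(K+1)/2$ enter: they guarantee that the resulting bound is nontrivial and strictly improves on the Welch bound, so that outside this range one simply reverts to Lemma \ref{lem-I.1}. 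Once these conditions are checked, the final algebraic rearrangement from the bound on $N$ to the stated bound on $I_{\max}(\mathcal{C})$ is routine.
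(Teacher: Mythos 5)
The paper offers no proof of this lemma---it is quoted directly from Levenshtein's work \cite{L}---so the comparison point is the standard linear-programming proof, which is exactly what you are trying to reconstruct. Your skeleton is the right one: the test polynomial $f(u)=u(u-s^2)$, the constant coefficient $f_0=2/(K(K+1))-s^2/K$, the threshold $s^2\le 4/(K+2)$ (resp.\ $6/(K+4)$ over $\mathbb{R}$), and the final rearrangement to (\ref{eqn-I.3}) and (\ref{eqn-I.2}) all check out. But there is a genuine gap at the one step that carries the whole proof.

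The inequality $N^2f_0\le\sum_{i,j}f(u_{ij})$ requires positive definiteness of the monic quadratic \emph{orthogonal} polynomial, which in the complex case is $Q_2(u)=u^2-\frac{4}{K+2}u+\frac{2}{(K+1)(K+2)}$, i.e.\ $\sum_{i,j}Q_2(u_{ij})\ge 0$ for every configuration. What your tensor Cauchy--Schwarz argument actually proves is the fourth-moment inequality $\sum_{i,j}u_{ij}^2\ge 2N^2/(K(K+1))$, which is positive definiteness of $u^2-\frac{2}{K(K+1)}=Q_2(u)+\frac{4}{K+2}Q_1(u)$. Your claim that these are ``equivalent'' is false: positive definiteness of $Q_2+\frac{4}{K+2}Q_1$ together with that of $Q_1$ does not yield positive definiteness of $Q_2$, since the correction term enters with the wrong sign to be discarded. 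The defect is not cosmetic: the Levenshtein bound provably cannot be extracted from your two moment inequalities plus $0\le u_{ij}\le s^2$, because the configuration $u_{ij}=s^2$ for all $i\ne j$ satisfies all of those constraints as soon as $s^2\ge\frac{N-K}{K(N-1)}$ and $s^4\ge\frac{2N-K(K+1)}{K(K+1)(N-1)}$, both of which are weaker than (\ref{eqn-I.3}) for large $N$; equivalently, $f(u)=u^2-s^2u$ has a negative linear coefficient and so is not a nonnegative combination of $1$, $u-\frac1K$ and $u^2-\frac{2}{K(K+1)}$. The missing ingredient is the addition theorem (Schoenberg--Bochner positive definiteness of the zonal Jacobi/Gegenbauer polynomials): e.g.\ over $\mathbb{R}$ one takes the harmonic (traceless) part $H(\textbf{c}_i)$ of $\textbf{c}_i^{\otimes 4}$ and verifies $\langle H(\textbf{c}_i),H(\textbf{c}_j)\rangle=\kappa\,G_4(\textbf{c}_i\textbf{c}_j^{H})$ with $\kappa>0$ and $G_4(t)=t^4-\frac{6}{K+4}t^2+\frac{3}{(K+2)(K+4)}$, whence $\sum_{i,j}G_4\ge 0$; similarly over $\mathbb{C}$ with the harmonic part of $(\textbf{c}_i^{H}\textbf{c}_i)^{\otimes 2}$. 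With that supplied, your argument goes through and is Levenshtein's proof. A secondary quibble: when $s^2>4/(K+2)$ you cannot ``revert to Lemma \ref{lem-I.1}'', since the Welch bound is the weaker one precisely in the regime $N>K^2$; the correct disposal of that case is to check that $4/(K+2)$ already exceeds the right-hand side of (\ref{eqn-I.3}), so the claimed bound holds trivially there (and likewise when $f_0\le 0$).
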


In general, it is very hard to construct codebooks achieving the Levenstein bound, which is denoted by $I_{L}$
(the right-hand side of (\ref{eqn-I.2}) or (\ref{eqn-I.3})). There are only a few known optimal constructions of codebooks achieving the Levenshtein bound. These optimal
codebooks were constructed from Kerdock codes \cite{CCKS,XDM}, perfect nonlinear functions \cite{DY}, bent functions over finite fields \cite{ZDL}, and bent functions over Galois rings \cite{HY}. Codebooks achieving the Levenshtein bound are used in quantum physics and the design of spreading sequences for CDMA and sets of mutually unbiased bases \cite{DY,WF}.

 Since it is very difficult to construct optimal codebooks, there have been a lot of attempts to construct a codebook nearly meeting the Welch bound or the Levenshtein bound with equality, i.e., $I_{\max}(\mathcal{C})$ is slightly higher than the bound equality, but asymptotically achieves it for large enough $N$. We follow the following definition throughout this paper.
 \begin{defn}\label{defn-intro} An $(N,K)$ codebook $\mathcal{C}$ is said to be nearly optimal if one of the following holds:
 \begin{enumerate}
 \item[$\bullet$] $\lim_{N \rightarrow \infty}\frac{I_{\max}(\mathcal{C})}{I_{W}}=1$ for any $(N,K)$ codebook $\mathcal{C}$ with $N\geq K$;
 \item[$\bullet$] $\lim_{N \rightarrow \infty}\frac{I_{\max}(\mathcal{C})}{I_{L}}=1$ for any real-valued codebook $\mathcal{C}$ with $N>K(K+1)/2$ or any complex-valued codebook $\mathcal{C}$ with $N>K^{2}$.
 \end{enumerate}\end{defn}
We remark that Definition \ref{defn-intro} has been actually used in \cite{HDY, HW, TZZ, XDM, Y, ZT}, though it was not  explicitly given before. We summarize
some well known nearly optimal codebooks in the following.
\begin{enumerate}
\item[$\diamondsuit$] In \cite{HW}, new constructions of $(N,K)$ codebooks nearly meeting the Welch bound were proposed based on difference sets and the product of Abelian groups.
\item[$\diamondsuit$] In \cite{ZT}, a construction of $(uv+k,k)$ codebook $\mathcal{C}$ with $I_{\max}(\mathcal{C})=\sqrt{\frac{1}{k}}$ was given from $(v,u,k,\lambda)$ relative difference set in an abelian group $G$ relative to a subgroup $H$ of $G$. Some specific nearly optimal codebooks were obtained by this construction.
\item[$\diamondsuit$] In \cite{TZZ}, the authors used Gauss sums to construct nearly optimal $(q^2-1,q-1)$ codebook $\mathcal{C}$ with $I_{\max}(\mathcal{C})=\frac{\sqrt{q}}{q-1}$, where $q$ is a power of a prime.
\item[$\diamondsuit$] In \cite{Y}, the authors constructed a new nearly optimal $(N,K)$ partial Fourier codebook $\mathcal{C}$ with $I_{\max}(\mathcal{C})=\frac{1}{\sqrt{K}}$, where $N=K^2-1$ and $K=p^k$ for any prime $p$ and a positive integer $k$.
\item[$\diamondsuit$] In \cite[Theorem 7]{XDM}, nearly optimal $(2^{2m}+2^m,2^m)$ codebooks $\mathcal{C}$ with $I_{\max}(\mathcal{C})=\sqrt{\frac{1}{2^{m-1}}}$ with respect to the Levenshtein bound were presented based on binary codes.
\item[$\diamondsuit$] In \cite{HDY}, new codebooks $\mathcal{C}$ with parameters $((q-1)^{k}+n,n)$ with $I_{\max}(\mathcal{C})=\frac{q^{\frac{k+1}{2}}}{(q-1)^{k}+(-1)^{k+1}}$, and new codebooks $\mathcal{C}'$ with parameters $((q-1)^{k}+q^{k-1},q^{^{k-1}})$ with $I_{\max}(\mathcal{C}')=\frac{q^{\frac{k+1}{2}}}{(q-1)^{k}+(-1)^{k+1}}$ were constructed with multiplicative characters over finite fields, where $n=\frac{1}{q}((q-1)^{k}+(-1)^{k+1})$ and $q$ is a power of a prime.
\end{enumerate}
Besides, there are also some constructions of codebooks with relatively small maximum
cross-correlation amplitude in \cite{DF, LYH, S, ZF1}.

The purpose of this paper is to construct nearly optimal codebooks based on some interesting character sums. We firstly introduce a family of character sums called generalized Jacobi sums which can be viewed as a generalization of the classical Jacobi sums. Based on the generalized Jacobi sums and their related character sums, two classes of nearly optimal codebooks with very flexible parameters are constructed. Our constructions produce codebooks with new parameters compared with known ones in literature. In particular, our main results contain those in \cite{HDY} as special cases.
\section{Mathematical Foundations}

In this section, we recall some necessary mathematical foundations on characters, Jacobi sums and Gauss sums
over finite fields. They will play important roles in our constructions of codebooks.

In this paper, we always assume that $p$ is a prime number and $q=p^{m}$ with $m$ being a positive integer. Let $\mathbb{F}_{q}$ denote the finite field with $q$ elements. Let $\alpha$ be a primitive element of $\mathbb{F}_{q}$. Let $\Tr_{q/p}$ be the trace function from $\mathbb{F}_{q}$ to $\mathbb{F}_{p}$ defined by
$$\Tr_{q/p}(x)=\sum_{j=0}^{m-1} x^{p^j}.$$

\subsection{Characters over finite fields}

In this section, we recall both additive and multiplicative characters over finite fields.

\begin{defn}
An additive character of $\mathbb{F}_{q}$ is a mapping $\chi$ from $\mathbb{F}_{q}$ to the set $\mathbb{C}^{*}$ of nonzero complex numbers such that $\chi(x+y)=\chi(x)\chi(y)$ for any $(x,y)\in \mathbb{F}_{q}\times \mathbb{F}_{q}$.
\end{defn}

It is well known that every additive character of $\mathbb{F}_{q}$ can be expressed as
$$\chi_{a}(x)=\zeta_{p}^{\Tr_{q/p}(ax)},\ x\in \mathbb{F}_{q},$$
where $\zeta_{p}$ is a primitive $p$-th root of complex unity. In particular, we call $\chi_0$ the trivial additive character and $\chi_1$ the canonical additive character of $\mathbb{F}_{q}$. The orthogonal relation of additive characters (see \cite{LN}) is given by
$$ \sum_{x\in \mathbb{F}_{q}}\chi_1(ax)=\left\{
\begin{array}{ll}
  q,   &      \mbox{if}\ a=0,\\
0, & \mbox{otherwise}.
\end{array} \right. $$

\begin{defn}\label{def-II}
A multiplicative character of $\mathbb{F}_{q}$ is a nonzero function $\psi$ from $\mathbb{F}_{q}^{*}$ to the set  $\mathbb{C}^{*}$ of nonzero complex numbers such that $\psi(xy)=\psi(x)\psi(y)$ for any $x,y\in \mathbb{F}_{q}^{*}$, where $\mathbb{F}_{q}^{*}=\mathbb{F}_{q}\backslash\{0\}$.
\end{defn}

The multiplicative characters of $\mathbb{F}_{q}$ can be expressed as follows \cite{LN}. For $j=0,1,\cdots,q-2$, the functions $\psi_{j}$ defined by
 $$\psi_{j}(\alpha^{k})=\zeta_{q-1}^{jk},\mbox{ for }k=0,1,\cdots,q-2,$$
are all the multiplicative characters of $\mathbb{F}_{q}$, where $\zeta_{h}=e^{\frac{2\pi \sqrt{-1}}{h}}$ denotes the $h$-th root of complex unity. If $j=0$, we have $\psi_{0}(x)=1$ for any $x\in\mathbb{F}_{q}^{*}$ and $\psi_{0}$  is called the trivial multiplicative character of $\mathbb{F}_{q}$.

For two multiplicative characters $\psi,\psi'$, we define their multiplication by setting $\psi\psi'(x)=\psi(x)\psi'(x)$ for all $x\in \mathbb{F}_{q}^{*}$. Let $\widehat{\mathbb{F}}_{q}^{*}$ be the set of all multiplicative characters of $\mathbb{F}_{q}$. Let $\overline{\psi}$ denote the conjugate character of $\psi$ by setting $\overline{\psi}(x)=\overline{\psi(x)}$, where $\overline{\psi(x)}$ denotes the complex conjugate of $\psi(x)$. It is easy to verify that $\psi^{-1}=\overline{\psi}$. Then $\widehat{\mathbb{F}}_{q}^{*}$ forms a group under the  multiplication of characters. Furthermore, $\widehat{\mathbb{F}}_{q}^{*}$ is isomorphic to $\mathbb{F}_{q}^{*}$.

For a multiplicative character $\psi$ of $\mathbb{F}_q$, the orthogonal relation (see \cite{LN}) of it is given by
 $$ \sum_{x\in \mathbb{F}_{q}^{*}}\psi(x)=\left\{
\begin{array}{ll}
  q-1,   &      \mbox{if}\ \psi=\psi_0,\\
0, & \mbox{otherwise}.
\end{array} \right. $$

\subsection{Jacobi sums}

 We now extend the definition of a multiplicative character $\psi$ by setting
 \begin{eqnarray}\label{eqn-extend}\psi(0)=\left\{
\begin{array}{ll}
1, & \mbox{if }\psi=\psi_{0},\\
0, & \mbox{if }\psi\neq \psi_0.
\end{array}
\right.\end{eqnarray}
Then the property that $\psi(xy)=\psi(x)\psi(y)$ holds for all $x,y\in \mathbb{F}_{q}$. With this definition, we deduce that
\begin{eqnarray}\label{eqn-sum} \sum_{x\in \mathbb{F}_{q}}\psi(x)=\left\{
\begin{array}{ll}
  q,   &      \mbox{if}\ \psi=\psi_0,\\
0, & \mbox{otherwise}.
\end{array} \right. \end{eqnarray}

\begin{defn}\cite[p. 205]{LN}
Let $\lambda_{1},\ldots,\lambda_{k}$ be $k$ multiplicative characters of $\mathbb{F}_{q}$.
The sum
$$
J(\lambda_{1},\ldots,\lambda_{k})=\sum_{c_{1}+\cdots+c_{k}=1 \atop c_1, \cdots, c_k \in \mathbb{F}_{q}}\lambda_{1}(c_1)\cdots \lambda_k(c_k),
$$
is called a \textit{Jacobi sum} in $\mathbb{F}_{q}$.
\end{defn}

Jacobi sums are very useful in coding theory, sequence design and cryptography. For any $a\in \mathbb{F}_{q}^{*}$, more generally, we define the sum
\begin{eqnarray*}
J_{a}(\lambda_{1},\ldots,\lambda_{k})=\sum_{c_{1}+\cdots+c_{k}=a}\lambda_{1}(c_1)\cdots \lambda_k(c_k), \end{eqnarray*}
where the summation extends over all $k$-tuples $(c_1,\ldots,c_k)$ of elements of $\mathbb{F}_{q}$ satisfying $c_{1}+\cdots+c_{k}=a$. Hence, $J_{1}(\lambda_{1},\ldots,\lambda_{k})=J(\lambda_{1},\ldots,\lambda_{k})$. It was shown in \cite[p. 205]{LN} that
\begin{eqnarray*}
J_{a}(\lambda_{1},\ldots,\lambda_{k})=(\lambda_{1}\cdot\cdot\cdot\lambda_{k})(a)J(\lambda_{1},\ldots,\lambda_{k}).
\end{eqnarray*}
Therefore, $|J_{a}(\lambda_{1},\ldots,\lambda_{k})|=|J(\lambda_{1},\ldots,\lambda_{k})|$. In \cite{LN}, the values of $|J(\lambda_{1},\ldots,\lambda_{k})|$ were determined for several cases.
\subsection{Gauss sums}
Let $\psi$ be a multiplicative character and $\chi$ an additive character of $\Bbb F_{q}$. The \emph{Gauss sum} $G(\psi,\chi)$ is defined by
\begin{eqnarray*}G(\psi,\chi)=\sum_{x\in \Bbb F_{q}^{*}}\psi(x)\chi(x).\end{eqnarray*}

The explicit value of $G(\psi,\chi)$ is very difficult to determine in general. However, its absolute value is known as follows.
\begin{lem}\label{lem-gausssum}\cite[Th. 5.11]{LN}
Let $\psi$ be a multiplicative character and $\chi$ an additive character of $\Bbb F_{q}$. Then $G(\psi,\chi)$ satisfies
\begin{eqnarray*} G(\psi,\chi)=\left\{
\begin{array}{ll}
  q-1,   &      \mbox{if}\ \psi=\psi_0,\chi=\chi_0,\\
-1, & \mbox{if}\ \psi=\psi_0,\chi\neq\chi_0,\\
0, &  \mbox{if}\ \psi\neq\psi_0,\chi=\chi_0.\\
\end{array} \right. \end{eqnarray*}
If $\psi\neq\psi_0,\chi\neq\chi_0$, then
\begin{eqnarray*}|G(\psi,\chi)|=\sqrt{q}.\end{eqnarray*}
\end{lem}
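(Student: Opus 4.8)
The plan is to split into the four cases indicated in the statement, dispose of the three degenerate ones immediately using the orthogonality relations already recorded in the excerpt, and concentrate the real work on the nondegenerate case $\psi\neq\psi_0$, $\chi\neq\chi_0$. First I would handle $\psi=\psi_0,\chi=\chi_0$: every summand of $G(\psi,\chi)=\sum_{x\in\mathbb{F}_q^*}\psi(x)\chi(x)$ equals $1$, and there are $q-1$ of them, so $G=q-1$. For $\psi=\psi_0,\chi\neq\chi_0$ the sum reduces to $\sum_{x\in\mathbb{F}_q^*}\chi(x)$; extending to all of $\mathbb{F}_q$ and subtracting the term $\chi(0)=1$, the additive orthogonality relation forces $\sum_{x\in\mathbb{F}_q}\chi(x)=0$, hence $G=-1$. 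For $\psi\neq\psi_0,\chi=\chi_0$ the sum is $\sum_{x\in\mathbb{F}_q^*}\psi(x)$, which is $0$ by the multiplicative orthogonality relation.

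For the main case I would compute the squared modulus $|G(\psi,\chi)|^2=G(\psi,\chi)\overline{G(\psi,\chi)}$ and expand it as a double sum over $x,y\in\mathbb{F}_q^*$ of $\psi(x)\overline{\psi(y)}\,\chi(x)\overline{\chi(y)}$. The crucial device is the substitution $x=yz$, which is legitimate because $y\neq0$ and lets $z$ range over $\mathbb{F}_q^*$. Using the multiplicativity of $\psi$ together with $|\psi(y)|=1$, the multiplicative factor collapses to $\psi(z)$; using $\overline{\chi(y)}=\chi(-y)$ together with additivity, the additive factor becomes $\chi\bigl(y(z-1)\bigr)$. This separates the two variables, yielding
\begin{eqnarray*}
|G(\psi,\chi)|^2=\sum_{z\in\mathbb{F}_q^*}\psi(z)\sum_{y\in\mathbb{F}_q^*}\chi\bigl(y(z-1)\bigr).
\end{eqnarray*}

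I would then evaluate the inner sum by a final appeal to additive orthogonality: when $z=1$ every term is $\chi(0)=1$, so the inner sum is $q-1$; when $z\neq1$ the coefficient $z-1$ is nonzero and $\chi$ is nontrivial, so the inner sum is $-1$. Separating the $z=1$ contribution gives $|G(\psi,\chi)|^2=(q-1)-\sum_{z\neq1}\psi(z)$. Since $\psi\neq\psi_0$, the multiplicative orthogonality relation gives $\sum_{z\in\mathbb{F}_q^*}\psi(z)=0$, so $\sum_{z\neq1}\psi(z)=-\psi(1)=-1$, whence $|G(\psi,\chi)|^2=q$ and therefore $|G(\psi,\chi)|=\sqrt{q}$. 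The main obstacle is purely the bookkeeping in this fourth case: the substitution $x=yz$ is exactly what decouples the multiplicative and additive characters so that the two orthogonality relations can be applied in succession, and once it is set up correctly the remaining computation is routine.
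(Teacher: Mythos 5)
Your proof is correct in all four cases: the three degenerate cases follow exactly as you say from the two orthogonality relations, and in the main case the substitution $x=yz$, the identity $\chi(x)\overline{\chi(y)}=\chi(y(z-1))$, and the final evaluation $\sum_{z\neq 1}\psi(z)=-1$ are all sound, giving $|G(\psi,\chi)|^2=q$. The paper itself gives no proof of this lemma --- it is quoted directly from Lidl and Niederreiter \cite[Th.~5.11]{LN} --- and your argument is essentially the standard one found there, so there is nothing further to reconcile.
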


If we consider the extended definition of $\psi$ in Equation (\ref{eqn-extend}), then the extended Gauss sum can be defined as
\begin{eqnarray*}\widehat{G}(\psi,\chi)=\sum_{x\in \Bbb F_{q}}\psi(x)\chi(x).\end{eqnarray*}
Lemma \ref{lem-gausssum} yields the following corollary.
\begin{cor}\label{cor-gausssum}
Let $\psi$ be a multiplicative character and $\chi$ an additive character of $\Bbb F_{q}$. Then $\widehat{G}(\psi,\chi)$ satisfies
\begin{eqnarray*} \widehat{G}(\psi,\chi)=\left\{
\begin{array}{ll}
  q,   &      \mbox{if}\ \psi=\psi_0,\chi=\chi_0,\\
0, & \mbox{if}\ \psi=\psi_0,\chi\neq\chi_0,\\
0, &  \mbox{if}\ \psi\neq\psi_0,\chi=\chi_0.\\
\end{array} \right. \end{eqnarray*}
If $\psi\neq\psi_0,\chi\neq\chi_0$, then
\begin{eqnarray*}|\widehat{G}(\psi,\chi)|=\sqrt{q}.\end{eqnarray*}
\end{cor}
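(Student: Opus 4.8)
The plan is to reduce the extended Gauss sum directly to the ordinary one of Lemma~\ref{lem-gausssum} by isolating the single new term coming from $x=0$. First I would record that every additive character satisfies $\chi(0)=1$: from $\chi(0)=\chi(0+0)=\chi(0)^2$ together with $\chi(0)\neq 0$, we must have $\chi(0)=1$. Splitting off the $x=0$ summand then gives
\begin{eqnarray*}
\widehat{G}(\psi,\chi)=\sum_{x\in \mathbb{F}_{q}}\psi(x)\chi(x)=\psi(0)\chi(0)+\sum_{x\in \mathbb{F}_{q}^{*}}\psi(x)\chi(x)=\psi(0)+G(\psi,\chi).
\end{eqnarray*}

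Next I would invoke the extended convention in Equation~(\ref{eqn-extend}), under which $\psi(0)=1$ when $\psi=\psi_0$ and $\psi(0)=0$ otherwise. Consequently $\widehat{G}(\psi,\chi)$ differs from $G(\psi,\chi)$ only in the two rows with $\psi=\psi_0$, where it acquires an extra summand $+1$, while for $\psi\neq\psi_0$ the two sums coincide.

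The remainder is a short case check against the values listed in Lemma~\ref{lem-gausssum}. When $\psi=\psi_0$ and $\chi=\chi_0$, we obtain $\widehat{G}=(q-1)+1=q$; when $\psi=\psi_0$ and $\chi\neq\chi_0$, we obtain $\widehat{G}=(-1)+1=0$; when $\psi\neq\psi_0$ and $\chi=\chi_0$, both $\psi(0)=0$ and $G=0$, so $\widehat{G}=0$; and when $\psi\neq\psi_0$ and $\chi\neq\chi_0$, again $\psi(0)=0$, whence $\widehat{G}=G$ and therefore $|\widehat{G}|=\sqrt{q}$.

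Since the statement is a direct corollary of Lemma~\ref{lem-gausssum}, I do not expect a genuine obstacle; the only point demanding care is the bookkeeping at $x=0$. Specifically, one must remember that the extension in (\ref{eqn-extend}) forces $\psi_0(0)=1$, which both upgrades $q-1$ to $q$ in the principal row and cancels the $-1$ in the $\psi=\psi_0$, $\chi\neq\chi_0$ row, while leaving every nonprincipal row unchanged.
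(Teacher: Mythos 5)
Your proof is correct and matches the paper's intended argument: the paper states this corollary with no written proof beyond ``Lemma \ref{lem-gausssum} yields the following corollary,'' and the implicit reasoning is exactly your reduction, namely $\widehat{G}(\psi,\chi)=\psi(0)+G(\psi,\chi)$ via the $x=0$ term under the convention of Equation (\ref{eqn-extend}), followed by the four-case check against Lemma \ref{lem-gausssum}. Nothing is missing, and your explicit bookkeeping at $x=0$ is the only content the paper left unstated.
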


\section{Generalized Jacobi sums and related character sums}
In this section, we present a generalization of Jacobi sums.

Let $k$ be any positive integer. For each integer $1\leq i \leq k$, let $m_i$ be any positive integer, $\lambda_i$ a multiplicative character of $\Bbb F_{q^{m_i}}$, $\chi_i$ the canonical additive character of $\Bbb F_{q^{m_i}}$, and $\Tr_{q^{m_i}/q}$ the trace function from $\Bbb F_{q^{m_i}}$ to $\Bbb F_{q}$. Let $\chi$ denote the canonical additive character of $\Bbb F_{q}$.

Now we define the \emph{generalized Jacobi sums} by
$$
\widehat{J}_{a}(\lambda_{1},\ldots,\lambda_{k})=\sum_{ (c_1, \cdots, c_k) \in \widehat{S}}\lambda_{1}(c_1)\cdots \lambda_k(c_k),
$$
where
$$\widehat{S}=\{(c_1, \cdots, c_k)\in \Bbb F_{q^{m_1}}\times \cdots \times \Bbb F_{q^{m_k}}:\Tr_{q^{m_1}/q}(c_{1})+\cdots+\Tr_{q^{m_k}/q}(c_{k})=a\}$$
for any
$a\in \Bbb F_{q}^{*}$. Note that $|\widehat{S}|=q^{m_1+\cdots+m_k-1}$. If $m_1=m_2=\cdots=m_k=1$ and $a=1$, then $\widehat{J}_{a}(\lambda_{1},\ldots,\lambda_{k})$ is the usual Jacobi sum.

\begin{thm}\label{th-main1}
Let $\lambda_i$ be a multiplicative character of $\Bbb F_{q^{m_i}}$ for $i=1,2,\ldots,k$. Assume that $\widehat{\Bbb F}_{q^{m_i}}^{*}=\langle\phi_i\rangle$ and $\lambda_i=\phi_{i}^{t_i}$, where $i=1,2,\ldots,k$ and $0\leq t_i \leq q^{m_i}-2$.
\begin{enumerate}
\item[(1)] If all the multiplicative characters $\lambda_1,\ldots,\lambda_k$ are trivial, then
$$\widehat{J}_{a}(\lambda_{1},\ldots,\lambda_{k})=q^{m_1+\cdots+m_k-1}.$$
\item[(2)] If some, but not all, of $\lambda_1,\ldots,\lambda_k$ are trivial, then
$$\widehat{J}_{a}(\lambda_{1},\ldots,\lambda_{k})=0.$$
\item[(3)] If all the multiplicative characters $\lambda_1,\ldots,\lambda_k$ are nontrivial and $t_1+\cdots+t_k\equiv 0\pmod{q-1}$, then
$$|\widehat{J}_{a}(\lambda_{1},\ldots,\lambda_{k})|=q^{\frac{m_1+\cdots+m_k-2}{2}}.$$
\item[(4)] If all the multiplicative characters $\lambda_1,\ldots,\lambda_k$ are nontrivial and $t_1+\cdots+t_k\not\equiv 0\pmod{q-1}$, then
$$|\widehat{J}_{a}(\lambda_{1},\ldots,\lambda_{k})|=q^{\frac{m_1+\cdots+m_k-1}{2}}.$$
\end{enumerate}
\end{thm}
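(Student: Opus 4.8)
The plan is to collapse the trace constraint defining $\widehat{S}$ by additive-character orthogonality, thereby factoring the generalized Jacobi sum into a product of (extended) Gauss sums over the fields $\mathbb{F}_{q^{m_i}}$, multiplied by a single Gauss sum over $\mathbb{F}_q$. First I would detect the condition $\sum_i\Tr_{q^{m_i}/q}(c_i)=a$ via the indicator $\tfrac1q\sum_{b\in\mathbb{F}_q}\chi\!\left(b\bigl(\sum_i\Tr_{q^{m_i}/q}(c_i)-a\bigr)\right)$. Using the extension (\ref{eqn-extend}) of each $\lambda_i$ to all of $\mathbb{F}_{q^{m_i}}$, I may run each coordinate over the full field, and after interchanging summations the sum factors:
\begin{equation*}
\widehat{J}_a(\lambda_1,\ldots,\lambda_k)=\frac{1}{q}\sum_{b\in\mathbb{F}_q}\chi(-ab)\prod_{i=1}^{k}\Bigl(\sum_{c_i\in\mathbb{F}_{q^{m_i}}}\lambda_i(c_i)\chi\bigl(b\Tr_{q^{m_i}/q}(c_i)\bigr)\Bigr).
\end{equation*}
The key identity is transitivity of the trace, $\Tr_{q^{m_i}/p}=\Tr_{q/p}\circ\Tr_{q^{m_i}/q}$, together with $\mathbb{F}_q$-linearity: for $b\in\mathbb{F}_q$ this yields $\chi(b\Tr_{q^{m_i}/q}(c_i))=\chi_i(bc_i)$, so the $i$-th factor becomes $S_i(b):=\sum_{c_i\in\mathbb{F}_{q^{m_i}}}\lambda_i(c_i)\chi_i(bc_i)$.

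Next I would evaluate $S_i(b)$. For $b=0$, equation (\ref{eqn-sum}), applied over $\mathbb{F}_{q^{m_i}}$, gives $S_i(0)=q^{m_i}$ when $\lambda_i$ is trivial and $S_i(0)=0$ otherwise. For $b\neq 0$, the substitution $c_i\mapsto b^{-1}c_i$ produces $S_i(b)=\overline{\lambda_i}(b)\,\widehat{G}(\lambda_i,\chi_i)$, where $\widehat{G}$ is the extended Gauss sum of Corollary \ref{cor-gausssum}. Parts (1) and (2) then follow at once: if every $\lambda_i$ is trivial only $b=0$ survives, giving $q^{-1}\prod_i q^{m_i}=q^{m_1+\cdots+m_k-1}$ (equivalently one just counts $|\widehat{S}|$); if some but not all $\lambda_i$ are trivial, the $b=0$ term vanishes because a nontrivial factor has $S_i(0)=0$, while for $b\neq 0$ a trivial factor contributes $\widehat{G}(\psi_0,\chi_i)=0$, so every term is $0$.

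For parts (3) and (4) all $\lambda_i$ are nontrivial, so $S_i(0)=0$ and only $b\neq 0$ remains, leaving
\begin{equation*}
\widehat{J}_a(\lambda_1,\ldots,\lambda_k)=\frac{1}{q}\Bigl(\prod_{i=1}^k\widehat{G}(\lambda_i,\chi_i)\Bigr)\sum_{b\in\mathbb{F}_q^*}\Bigl(\prod_{i=1}^k\overline{\lambda_i}(b)\Bigr)\chi(-ab).
\end{equation*}
Setting $\mu:=\prod_{i=1}^k\bigl(\lambda_i|_{\mathbb{F}_q^*}\bigr)$, the remaining inner sum is $\mu(-a)\,G(\overline{\mu},\chi)$ for the canonical (hence nontrivial) additive character $\chi$ of $\mathbb{F}_q$. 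Since each nontrivial factor satisfies $|\widehat{G}(\lambda_i,\chi_i)|=q^{m_i/2}$ by Lemma \ref{lem-gausssum}, I obtain $|\widehat{J}_a|=q^{(m_1+\cdots+m_k)/2-1}$ when $\mu$ is trivial (as then $G(\psi_0,\chi)=-1$) and $|\widehat{J}_a|=q^{(m_1+\cdots+m_k)/2-1}\sqrt{q}=q^{(m_1+\cdots+m_k-1)/2}$ when $\mu$ is nontrivial.

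The hard part will be the final bookkeeping: showing that $\mu$ is trivial exactly when $t_1+\cdots+t_k\equiv 0\pmod{q-1}$. The delicate point is that the $\lambda_i$ are characters of \emph{different} fields $\mathbb{F}_{q^{m_i}}$, so I must control how each restricts to the common subgroup $\mathbb{F}_q^*$ and how these restrictions combine into one character of $\mathbb{F}_q^*$. Choosing the generators $\phi_i$ compatibly — equivalently, primitive elements $\gamma_i$ of $\mathbb{F}_{q^{m_i}}$ whose norms $\gamma_i^{(q^{m_i}-1)/(q-1)}=g$ agree with a fixed generator $g$ of $\mathbb{F}_q^*$ — one checks $\phi_i(g)=\zeta_{q-1}$, hence $\lambda_i|_{\mathbb{F}_q^*}(g)=\zeta_{q-1}^{t_i}$ and $\mu(g)=\zeta_{q-1}^{t_1+\cdots+t_k}$. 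Thus $\mu$ is trivial if and only if $(q-1)\mid\sum_i t_i$, which is precisely the dichotomy separating (3) from (4) and completes the proof.
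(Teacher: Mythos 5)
Your proof is correct and follows essentially the same route as the paper: detect the trace constraint via additive-character orthogonality, factor $\widehat{J}_a$ into the extended Gauss sums $\widehat{G}(\lambda_i,\chi_i)$ times a single Gauss sum over $\mathbb{F}_q$, and then apply Lemma \ref{lem-gausssum} and Corollary \ref{cor-gausssum} case by case. You are in fact more careful than the paper on the one delicate point: the paper simply asserts $\phi_i(y)=\psi(y)$ for all $y\in\mathbb{F}_q^*$, which (as you observe) is valid only for compatibly chosen generators $\phi_i$ — e.g., arising from norm-compatible primitive elements — so your explicit analysis of the restricted character $\mu=\prod_{i=1}^k\lambda_i|_{\mathbb{F}_q^*}$ supplies a justification the published proof leaves implicit.
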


\begin{proof}
By the orthogonal relation of additive characters, we have
\begin{eqnarray}\label{eqn-1}
& &\nonumber \widehat{J}_{a}(\lambda_{1},\ldots,\lambda_{k})\\
\nonumber&=&\frac{1}{q}\sum_{\substack{(c_1,\ldots,c_k)\in\Bbb F_{q^{m_1}}\\\times \cdots \times \Bbb F_{q^{m_k}}} }\lambda_{1}(c_1)\cdots \lambda_k(c_k)\sum_{y\in \Bbb F_{q}}\chi\left(y\left(\Tr_{q^{m_1}/q}(c_{1})+\cdots+\Tr_{q^{m_k}/q}(c_{k})-a\right)\right)\\
\nonumber &=&\frac{1}{q}\left(\sum_{c_1\in\Bbb F_{q^{m_1}}}\lambda_{1}(c_1)\right)\cdots\left(\sum_{c_k\in\Bbb F_{q^{m_k}}}\lambda_{k}(c_k)\right)\sum_{y\in \Bbb F_{q}^{*}}\chi\left(y\left(\Tr_{q^{m_1}/q}(c_{1})+\cdots+\Tr_{q^{m_k}/q}(c_{k})-a\right)\right)\\
\nonumber & &+\frac{1}{q}\left(\sum_{c_1\in\Bbb F_{q^{m_1}}}\lambda_{1}(c_1)\right)\cdots\left(\sum_{c_k\in\Bbb F_{q^{m_k}}}\lambda_{k}(c_k)\right)\\
\nonumber &=&\frac{1}{q}\left(\sum_{c_1\in\Bbb F_{q^{m_1}}}\lambda_{1}(c_1)\right)\cdots\left(\sum_{c_k\in\Bbb F_{q^{m_k}}}\lambda_{k}(c_k)\right)\sum_{y\in \Bbb F_{q}^{*}}\chi_1(yc_1)\cdots\chi_k(yc_k)\overline{\chi}(ay)\\
\nonumber & &+\frac{1}{q}\left(\sum_{c_1\in\Bbb F_{q^{m_1}}}\lambda_{1}(c_1)\right)\cdots\left(\sum_{c_k\in\Bbb F_{q^{m_k}}}\lambda_{k}(c_k)\right)\\
\nonumber &=&\frac{1}{q}\sum_{y\in \Bbb F_{q}^{*}}\overline{\chi}(ay)\overline{\lambda_1\cdots\lambda_k}(y)\left(\sum_{c_1\in\Bbb F_{q^{m_1}}}\lambda_1(yc_1)\chi_1(yc_1)\right)\cdots\left(\sum_{c_k\in\Bbb F_{q^{m_k}}}\lambda_k(yc_k)\chi_k(yc_k)\right)\\
\nonumber & &+\frac{1}{q}\left(\sum_{c_1\in\Bbb F_{q^{m_1}}}\lambda_{1}(c_1)\right)\cdots\left(\sum_{c_k\in\Bbb F_{q^{m_k}}}\lambda_{k}(c_k)\right)\\
\nonumber &=&\frac{1}{q}\widehat{G}(\lambda_1,\chi_1)\cdots\widehat{G}(\lambda_k,\chi_k)\sum_{y\in \Bbb F_{q}^{*}}\overline{\chi}(ay)\overline{\lambda_1\cdots\lambda_k}(y)+\frac{1}{q}\left(\sum_{c_1\in\Bbb F_{q^{m_1}}}\lambda_{1}(c_1)\right)\cdots\left(\sum_{c_k\in\Bbb F_{q^{m_k}}}\lambda_{k}(c_k)\right)\\
\nonumber &=&\frac{1}{q}(\lambda_1\cdots\lambda_k)(a)\widehat{G}(\lambda_1,\chi_1)\cdots\widehat{G}(\lambda_k,\chi_k)\sum_{y\in \Bbb F_{q}^{*}}\overline{\chi}(ay)\overline{\lambda_1\cdots\lambda_k}(ay)\\
\nonumber& &+\frac{1}{q}\left(\sum_{c_1\in\Bbb F_{q^{m_1}}}\lambda_{1}(c_1)\right)\cdots\left(\sum_{c_k\in\Bbb F_{q^{m_k}}}\lambda_{k}(c_k)\right).\\
\end{eqnarray}
Assume that $\widehat{\Bbb F}_{q^{m_i}}^{*}=\langle\phi_i\rangle$ and $\lambda_i=\phi_{i}^{t_i}$, where $i=1,2,\ldots,k$ and $0\leq t_i \leq q^{m_i}-2$. Let $\widehat{\Bbb F}_{q}^{*}=\langle\psi\rangle$. For $y\in \Bbb F_{q}^{*}$, one can deduce that
$$\phi_{i}(y)=\psi(y)\text{ for }i=1,2,\ldots,k.$$
Hence, $(\lambda_1\cdots\lambda_k)(y)=\psi^{t_1+\cdots+t_k}(y)$ where $y\in \Bbb F_{q}^{*}$ and $0\leq t_i \leq q^{m_i}-2$ for $i=1,2,\ldots,k$. This implies that
\begin{eqnarray}\label{eqn-2}
\nonumber \sum_{y\in \Bbb F_{q}^{*}}\overline{\chi}(ay)\overline{\lambda_1\cdots\lambda_k}(ay)&=&\sum_{y\in \Bbb F_{q}^{*}}\overline{\chi}(y)\overline{\lambda_1\cdots\lambda_k}(y)\\
\nonumber &=&\sum_{y\in \Bbb F_{q}^{*}}\overline{\chi}(y)\overline{\psi}^{t_1+\cdots+t_k}(y)\\
&=&G(\overline{\psi}^{t_1+\cdots+t_k},\overline{\chi}).
\end{eqnarray}
Combining Equations (\ref{eqn-1}) and (\ref{eqn-2}), we have
\begin{eqnarray}\label{eqn-3}
\nonumber \widehat{J}_{a}(\lambda_{1},\ldots,\lambda_{k})&=&\frac{1}{q}(\lambda_1\cdots\lambda_k)(a)\widehat{G}(\lambda_1,\chi_1)\cdots\widehat{G}(\lambda_k,\chi_k)
G(\overline{\psi}^{t_1+\cdots+t_k},\overline{\chi})\\
& &+\frac{1}{q}\left(\sum_{c_1\in\Bbb F_{q^{m_1}}}\lambda_{1}(c_1)\right)\cdots\left(\sum_{c_k\in\Bbb F_{q^{m_k}}}\lambda_{k}(c_k)\right)
\end{eqnarray}
where $0\leq t_i \leq q^{m_i}-2$ for $i=1,2,\ldots,k$. In the following, we discuss the absolute values of $\widehat{J}_{a}(\lambda_{1},\ldots,\lambda_{k}),a\in \Bbb F_{q}^{*}$, in several cases.
\begin{enumerate}
\item[(1)] If all the multiplicative characters $\lambda_1,\ldots,\lambda_k$ are trivial, we have
$$\widehat{J}_{a}(\lambda_{1},\ldots,\lambda_{k})=|\widehat{S}|=q^{m_1+\cdots+m_k-1}.$$
\item[(2)] If some, but not all, of $\lambda_1,\ldots,\lambda_k$ are trivial, then by Equations (\ref{eqn-sum}), (\ref{eqn-3}) and Corollary (\ref{cor-gausssum}) we have
$$\widehat{J}_{a}(\lambda_{1},\ldots,\lambda_{k})=0.$$
\item[(3)] Assume that all the multiplicative characters $\lambda_1,\ldots,\lambda_k$ are nontrivial. By Equations (\ref{eqn-sum}) and (\ref{eqn-3}), we have
$$\widehat{J}_{a}(\lambda_{1},\ldots,\lambda_{k})=\frac{1}{q}(\lambda_1\cdots\lambda_k)(a)\widehat{G}(\lambda_1,\chi_1)\cdots\widehat{G}(\lambda_k,\chi_k)
G(\overline{\psi}^{t_1+\cdots+t_k},\overline{\chi}).$$
Now we discuss the absolute values of $\widehat{J}_{a}(\lambda_{1},\ldots,\lambda_{k})$ in the following two cases.
\begin{enumerate}
\item[$\bullet$] If $t_1+\cdots+t_k\equiv 0\pmod{q-1}$, then by Lemma \ref{lem-gausssum} and Corollary \ref{cor-gausssum} we have
$$|\widehat{J}_{a}(\lambda_{1},\ldots,\lambda_{k})|=q^{\frac{m_1+\cdots+m_k-2}{2}}.$$
\item[$\bullet$] If $t_1+\cdots+t_k\not\equiv 0\pmod{q-1}$, then by Lemma \ref{lem-gausssum} and Corollary \ref{cor-gausssum} we have
$$|\widehat{J}_{a}(\lambda_{1},\ldots,\lambda_{k})|=q^{\frac{m_1+\cdots+m_k-1}{2}}.$$
\end{enumerate}
\end{enumerate}
The proof is completed.
\end{proof}

Now we define another character sum related to generalized Jacobi sums by
$$
\widetilde{J}_{a}(\lambda_{1},\ldots,\lambda_{k})=\sum_{ (c_1, \cdots, c_k) \in \widetilde{S}}\lambda_{1}(c_1)\cdots \lambda_k(c_k),
$$
where
$$\widetilde{S}=\{(c_1, \cdots, c_k)\in \Bbb F_{q^{m_1}}^{*}\times \ldots \times \Bbb F_{q^{m_k}}^{*}:\Tr_{q^{m_1}/q}(c_{1})+\cdots+\Tr_{q^{m_k}/q}(c_{k})=a\}$$
and
$a\in \Bbb F_{q}$.

\begin{lem}\label{lem-value}
The values of $|\widetilde{S}|$ for $a\in \Bbb F_{q}^{*}$ and $a=0$ are respectively given as follows.
\begin{enumerate}
\item[(1)] For $a\in \Bbb F_{q}^{*}$, $|\widetilde{S}|=\frac{1}{q}\left((q^{m_1}-1)\cdots(q^{m_k}-1)+(-1)^{k+1}\right)$.
\item[(2)] For $a=0$, $|\widetilde{S}|=\frac{1}{q}\left((q^{m_1}-1)\cdots(q^{m_k}-1)+(-1)^{k}(q-1)\right)$.
\end{enumerate}
\end{lem}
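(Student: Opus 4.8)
The plan is to compute $|\widetilde{S}|$ by detecting the $\Bbb F_q$-linear trace constraint with the orthogonality relation for the additive characters of $\Bbb F_q$, exactly as in the proof of Theorem \ref{th-main1}. First I would write
$$|\widetilde{S}| = \frac{1}{q}\sum_{y\in\Bbb F_q}\;\sum_{\substack{c_i\in\Bbb F_{q^{m_i}}^*\\ i=1,\ldots,k}} \chi\left(y\Big(\Tr_{q^{m_1}/q}(c_1)+\cdots+\Tr_{q^{m_k}/q}(c_k)-a\Big)\right),$$
since the inner average over $y$ is the indicator of the event $\Tr_{q^{m_1}/q}(c_1)+\cdots+\Tr_{q^{m_k}/q}(c_k)=a$. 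Separating off the term $y=0$ immediately yields the main term $\frac{1}{q}\prod_{i=1}^k(q^{m_i}-1)$, and it remains only to evaluate the contribution of the terms with $y\in\Bbb F_q^*$.

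For $y\neq 0$ the summand factors as $\overline{\chi}(ay)\prod_{i=1}^k\sum_{c_i\in\Bbb F_{q^{m_i}}^*}\chi\left(y\Tr_{q^{m_i}/q}(c_i)\right)$. The key step is to recognize each inner factor as a complete additive character sum over $\Bbb F_{q^{m_i}}^*$: because the trace is $\Bbb F_q$-linear and the canonical character $\chi_i$ of $\Bbb F_{q^{m_i}}$ satisfies $\chi_i(x)=\chi(\Tr_{q^{m_i}/q}(x))$ by transitivity of the trace, one has $\chi(y\Tr_{q^{m_i}/q}(c_i))=\chi(\Tr_{q^{m_i}/q}(yc_i))=\chi_i(yc_i)$. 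Since $y\in\Bbb F_q^*\subseteq\Bbb F_{q^{m_i}}^*$, multiplication by $y$ permutes $\Bbb F_{q^{m_i}}^*$, so $\sum_{c_i\in\Bbb F_{q^{m_i}}^*}\chi_i(yc_i)=\sum_{d\in\Bbb F_{q^{m_i}}^*}\chi_i(d)=-1$ by the orthogonal relation for additive characters. Hence each of the $k$ factors equals $-1$, and their product is $(-1)^k$, independently of $y$.

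It then remains to evaluate $\frac{(-1)^k}{q}\sum_{y\in\Bbb F_q^*}\overline{\chi}(ay)$, and this is the one place where the two cases diverge. When $a\in\Bbb F_q^*$, the map $y\mapsto ay$ permutes $\Bbb F_q^*$, so $\sum_{y\in\Bbb F_q^*}\overline{\chi}(ay)=\sum_{z\in\Bbb F_q^*}\overline{\chi}(z)=-1$, contributing the extra term $(-1)^{k+1}/q$ and establishing (1). When $a=0$, every summand equals $\overline{\chi}(0)=1$, so the sum is $q-1$, contributing the extra term $(-1)^k(q-1)/q$ and establishing (2). I do not expect a serious obstacle here: the argument is a routine application of orthogonality, and the only points that genuinely demand care are the trace-compatibility identity $\chi(y\Tr_{q^{m_i}/q}(c_i))=\chi_i(yc_i)$ (which relies on both $y\in\Bbb F_q$ and the $\Bbb F_q$-linearity of the trace) and the clean bookkeeping of the final $y$-sum in the two cases $a=0$ and $a\neq 0$.
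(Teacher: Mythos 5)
Your proposal is correct and follows essentially the same route as the paper's proof: detect the trace condition with the orthogonality relation for additive characters of $\Bbb F_{q}$, split off the $y=0$ term to get the main term $\frac{1}{q}\prod_{i=1}^{k}(q^{m_i}-1)$, use transitivity of the trace to rewrite each inner sum as $\sum_{c_i\in \Bbb F_{q^{m_i}}^{*}}\chi_i(yc_i)=-1$, and then evaluate $\sum_{y\in\Bbb F_q^{*}}\overline{\chi}(ay)$ separately for $a\neq 0$ and $a=0$. The only cosmetic difference is that you write $\overline{\chi}(ay)$ where the paper writes $\chi(-ay)$, which are identical, and you spell out the trace-compatibility step that the paper leaves implicit.
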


\begin{proof}
By the orthogonal relation of additive characters, we have
\begin{eqnarray*}
|\widetilde{S}|&=&\frac{1}{q}\sum_{c_1\in \Bbb F_{q^{m_1}}^{*}}\cdots\sum_{c_k\in \Bbb F_{q^{m_k}}^{*}}\sum_{y\in \Bbb F_{q}}\chi(y(\Tr_{q^{m_1}/q}(c_{1})+\cdots+\Tr_{q^{m_k}/q}(c_{k})-a))\\
&=&\frac{(q^{m_1}-1)\cdots(q^{m_k}-1)}{q}+\frac{1}{q}\sum_{y\in \Bbb F_{q}^{*}}\chi(-ay)\left(\sum_{c_1\in \Bbb F_{q^{m_1}}^{*}}\chi_1(yc_1)\right)\cdots\left(\sum_{c_k\in \Bbb F_{q^{m_k}}^{*}}\chi_k(yc_k)\right)\\
&=&\left\{
\begin{array}{ll}
  \frac{1}{q}\left((q^{m_1}-1)\cdots(q^{m_k}-1)+(-1)^{k+1}\right),   &      \mbox{if}\ a\in \Bbb F_{q}^{*},\\
\frac{1}{q}\left((q^{m_1}-1)\cdots(q^{m_k}-1)+(-1)^{k}(q-1)\right), & \mbox{if}\ a=0.\\
\end{array} \right.
\end{eqnarray*} This proves the conclusions.
\end{proof}

The relationship between $\widetilde{J}_{a}(\lambda_{1},\ldots,\lambda_{k})$ and $\widehat{J}_{a}(\lambda_{1},\ldots,\lambda_{k})$, $a\in \Bbb F_{q}^{*}$, is established as follows.

\begin{lem}\label{lem-relation}
The relationship between $\widetilde{J}_{a}(\lambda_{1},\ldots,\lambda_{k})$ and $\widehat{J}_{a}(\lambda_{1},\ldots,\lambda_{k})$, $a\in \Bbb F_{q}^{*}$, is given as follows.
\begin{enumerate}
\item[(1)] If all of $\lambda_{1},\ldots,\lambda_{k}$ are nontrivial, then
$$\widetilde{J}_{a}(\lambda_{1},\ldots,\lambda_{k})=\widehat{J}_{a}(\lambda_{1},\ldots,\lambda_{k}).$$
\item[(2)] If $\lambda_1,\ldots,\lambda_h$ are nontrivial and $\lambda_{h+1},\ldots,\lambda_k$ are trivial, $1\leq h \leq k-1$, then
$$\widetilde{J}_{a}(\lambda_{1},\ldots,\lambda_{k})=(-1)^{k-h}\widehat{J}_{a}(\lambda_{1},\ldots,\lambda_{h}).$$
\end{enumerate}
\end{lem}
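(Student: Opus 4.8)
The plan is to handle the two parts separately, with part (1) being an immediate consequence of the extended definition of multiplicative characters in Equation (\ref{eqn-extend}) and part (2) following from the same additive-character orthogonality computation that drives the proof of Theorem \ref{th-main1}.

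For part (1), I would simply note that when every $\lambda_i$ is nontrivial we have $\lambda_i(0)=0$ by (\ref{eqn-extend}). Hence any tuple in $\widehat{S}$ with at least one zero coordinate contributes $0$ to $\widehat{J}_a(\lambda_{1},\ldots,\lambda_{k})$, and the only surviving tuples are those with all coordinates nonzero, which are precisely the elements of $\widetilde{S}$. Since the summand $\lambda_1(c_1)\cdots\lambda_k(c_k)$ agrees on these tuples, the identity $\widetilde{J}_a(\lambda_{1},\ldots,\lambda_{k})=\widehat{J}_a(\lambda_{1},\ldots,\lambda_{k})$ follows with no computation.

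For part (2), I would expand $\widetilde{J}_a(\lambda_{1},\ldots,\lambda_{k})$ exactly as in Equation (\ref{eqn-1}), writing the trace constraint as $\frac{1}{q}\sum_{y\in\mathbb{F}_q}\chi(y(\cdots-a))$ and using $\chi(y\Tr_{q^{m_i}/q}(c_i))=\chi_i(yc_i)$ to factor the sum into a product of one-variable sums $\sum_{c_i\in\mathbb{F}_{q^{m_i}}^{*}}\lambda_i(c_i)\chi_i(yc_i)$. The $y=0$ term drops out because $\lambda_1,\ldots,\lambda_h$ are nontrivial, so $\sum_{c_1\in\mathbb{F}_{q^{m_1}}^{*}}\lambda_1(c_1)=0$. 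For $y\neq 0$ the substitution $c_i\mapsto yc_i$ turns each factor with a nontrivial $\lambda_i$ into $\overline{\lambda_i}(y)G(\lambda_i,\chi_i)$, while each factor with a trivial $\lambda_j$ ($h<j\leq k$) becomes $\sum_{c_j\in\mathbb{F}_{q^{m_j}}^{*}}\chi_j(yc_j)=G(\psi_0,\chi_j)=-1$ by Lemma \ref{lem-gausssum}. Pulling out the $k-h$ copies of $(-1)$ produces the factor $(-1)^{k-h}$, and what remains is verbatim the orthogonality expansion of $\widetilde{J}_a(\lambda_{1},\ldots,\lambda_{h})$; that is, $\widetilde{J}_a(\lambda_{1},\ldots,\lambda_{k})=(-1)^{k-h}\widetilde{J}_a(\lambda_{1},\ldots,\lambda_{h})$. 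Applying part (1) to the $h$-character sum (all of whose characters are nontrivial) replaces $\widetilde{J}_a(\lambda_{1},\ldots,\lambda_{h})$ by $\widehat{J}_a(\lambda_{1},\ldots,\lambda_{h})$, which finishes the argument.

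The only point requiring genuine care is the bookkeeping of the trivial factors: because the inner sums in $\widetilde{J}$ run over $\mathbb{F}_{q^{m_j}}^{*}$ rather than the full field, the correct value of the trivial-character factor is the ordinary Gauss sum $G(\psi_0,\chi_j)=-1$ (valid since each $\chi_j$ is the canonical, hence nontrivial, additive character), not the extended $\widehat{G}$ used in Theorem \ref{th-main1}. Ensuring the nonzero range is used consistently—so that the $y=0$ term vanishes and the reindexing $c_i\mapsto yc_i$ is carried out on $\mathbb{F}_{q^{m_i}}^{*}$—is precisely what distinguishes this computation from the one for $\widehat{J}_a$ and what generates the sign $(-1)^{k-h}$.
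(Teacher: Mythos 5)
Your proof is correct, and part (1) matches the paper, which simply declares it obvious; your justification via $\lambda_i(0)=0$ under the extension (\ref{eqn-extend}) is the intended one. For part (2), however, you take a genuinely different route. The paper does not re-run any orthogonality expansion: it fixes the $h$-tuple $(c_1,\ldots,c_h)\in\mathbb{F}_{q^{m_1}}^{*}\times\cdots\times\mathbb{F}_{q^{m_h}}^{*}$ and uses Lemma \ref{lem-value} to count the completions $(c_{h+1},\ldots,c_k)$ with nonzero coordinates satisfying the trace constraint; this count takes one value when $\Tr_{q^{m_1}/q}(c_1)+\cdots+\Tr_{q^{m_h}/q}(c_h)\neq a$ and another when it equals $a$, and the difference of the two values is exactly $(-1)^{k-h}$. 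Since the sum of $\lambda_1(c_1)\cdots\lambda_h(c_h)$ over \emph{all} $h$-tuples vanishes (the $\lambda_i$ being nontrivial), only the stratum with partial trace equal to $a$ survives, giving $\widetilde{J}_a(\lambda_1,\ldots,\lambda_k)=(-1)^{k-h}\widetilde{J}_a(\lambda_1,\ldots,\lambda_h)$, after which part (1) finishes --- the same intermediate identity you reach. Your version instead repeats the additive-character/Gauss-sum factorization from the proof of Theorem \ref{th-main1}, but over the punctured ranges, with the key (and correctly flagged) subtlety that each trivial-character factor is the ordinary Gauss sum $G(\psi_0,\chi_j)=-1$ rather than the extended $\widehat{G}(\psi_0,\chi_j)=0$; the $k-h$ factors of $-1$ are precisely where the sign comes from. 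Both arguments are sound. Yours buys uniformity with Theorem \ref{th-main1} and yields, as a byproduct, the explicit formula $\widetilde{J}_a(\lambda_1,\ldots,\lambda_h)=\frac{1}{q}\sum_{y\in\mathbb{F}_{q}^{*}}\overline{\chi}(ay)\prod_{i=1}^{h}\overline{\lambda_i}(y)\,G(\lambda_i,\chi_i)$, from which the magnitudes in Theorem \ref{th-main2} could be read off directly; the paper's buys brevity, reusing the already-established counting Lemma \ref{lem-value} and isolating the sign $(-1)^{k-h}$ as a purely combinatorial difference of two completion counts.
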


\begin{proof}
The first conclusion is obvious and we only prove the second conclusion. If $\lambda_1,\ldots,\lambda_h$ are nontrivial and $\lambda_{h+1},\ldots,\lambda_k$ are trivial, $1\leq h \leq k-1$, then
\begin{eqnarray*}
\widetilde{J}_{a}(\lambda_{1},\ldots,\lambda_{k})&=&\sum_{ (c_1, \cdots, c_k) \in \widetilde{S}}\lambda_{1}(c_1)\cdots \lambda_k(c_k)\\
&=&\sum_{ (c_1, \cdots, c_k) \in \widetilde{S}}\lambda_{1}(c_1)\cdots \lambda_h(c_h),
\end{eqnarray*}
where $\widetilde{S}=\{(c_1, \cdots, c_k)\in \Bbb F_{q^{m_1}}^{*}\times \ldots \times \Bbb F_{q^{m_k}}^{*}:\Tr_{q^{m_1}/q}(c_{1})+\cdots+\Tr_{q^{m_k}/q}(c_{k})=a\}$. For a fixed $h$-tuple $$(c_1,\ldots,c_h)\in \Bbb F_{q^{m_1}}^{*}\times \cdots \times \Bbb F_{q^{m_h}}^{*},$$
we deduce that the number of the solutions of the equation
$$\Tr_{q^{m_{h+1}}/q}(c_{h+1})+\cdots+\Tr_{q^{m_k}/q}(c_{k})=a-(\Tr_{q^{m_1}/q}(c_{1})+\cdots+\Tr_{q^{m_h}/q}(c_{h}))$$
equals
$$\left\{
\begin{array}{ll}
  \frac{1}{q}\left((q^{m_{h+1}}-1)\cdots(q^{m_k}-1)+(-1)^{k-h+1}\right),   &      \mbox{if}\ \Tr_{q^{m_1}/q}(c_{1})+\cdots+\Tr_{q^{m_h}/q}(c_{h})\neq a,\\
\frac{1}{q}\left((q^{m_{h+1}}-1)\cdots(q^{m_k}-1)+(-1)^{k-h}(q-1)\right), & \mbox{if}\ \Tr_{q^{m_1}/q}(c_{1})+\cdots+\Tr_{q^{m_h}/q}(c_{h})= a,\\
\end{array} \right.$$
by Lemma \ref{lem-value}. Hence,
\begin{eqnarray*}
& &\widetilde{J}_{a}(\lambda_{1},\ldots,\lambda_{k})\\
&=&\frac{1}{q}\left((q^{m_{h+1}}-1)\cdots(q^{m_k}-1)+(-1)^{k-h+1}\right)\sum_{\substack{ (c_1, \cdots, c_h) \in \Bbb F_{q^{m_1}}^{*}\times\cdots\times\Bbb F_{q^{m_h}}^{*}\\ \Tr_{q^{m_1}/q}(c_{1})+\cdots+\Tr_{q^{m_h}/q}(c_{h})\neq a}}\lambda_{1}(c_1)\cdots \lambda_h(c_h)\\
& &+\frac{1}{q}\left((q^{m_{h+1}}-1)\cdots(q^{m_k}-1)+(-1)^{k-h}(q-1)\right)\sum_{\substack{ (c_1, \cdots, c_h) \in \Bbb F_{q^{m_1}}^{*}\times\cdots\times\Bbb F_{q^{m_h}}^{*}\\ \Tr_{q^{m_1}/q}(c_{1})+\cdots+\Tr_{q^{m_h}/q}(c_{h})= a}}\lambda_{1}(c_1)\cdots \lambda_h(c_h)\\
&=&\frac{1}{q}\left((q^{m_{h+1}}-1)\cdots(q^{m_k}-1)+(-1)^{k-h+1}\right)\sum_{\substack{ (c_1, \cdots, c_h) \in \Bbb F_{q^{m_1}}^{*}\times\cdots\times\Bbb F_{q^{m_h}}^{*}}}\lambda_{1}(c_1)\cdots \lambda_h(c_h)\\
& &+(-1)^{k-h}\widetilde{J}_{a}(\lambda_{1},\ldots,\lambda_{h}).
\end{eqnarray*}
Since $\lambda_1,\ldots,\lambda_h$ are nontrivial, we have
\begin{eqnarray*}\sum_{\substack{ (c_1, \cdots, c_h) \in \Bbb F_{q^{m_1}}^{*}\times\cdots\times\Bbb F_{q^{m_h}}^{*}}}\lambda_{1}(c_1)\cdots \lambda_h(c_h)
=\left(\sum_{c_1\in \Bbb F_{q^{m_1}}^{*}}\lambda_{1}(c_1)\right)\cdots\left(\sum_{c_h\in \Bbb F_{q^{m_h}}^{*}}\lambda_{h}(c_h)\right)=0.\end{eqnarray*}
Thus $$\widetilde{J}_{a}(\lambda_{1},\ldots,\lambda_{k})=(-1)^{k-h}\widetilde{J}_{a}(\lambda_{1},\ldots,\lambda_{h})
=(-1)^{k-h}\widehat{J}_{a}(\lambda_{1},\ldots,\lambda_{h}).$$
\end{proof}

Combining Theorem \ref{th-main1} and Lemma \ref{lem-relation} directly yields the following theorem.

\begin{thm}\label{th-main2}
Let $a\in \Bbb F_{q}^{*}$. Let $\lambda_i$ be a multiplicative character of $\Bbb F_{q^{m_i}}$ for $i=1,2,\ldots,k$. Assume that $\widehat{\Bbb F}_{q^{m_i}}^{*}=\langle\phi_i\rangle$ and $\lambda_i=\phi_{i}^{t_i}$, where $i=1,2,\ldots,k$ and $0\leq t_i \leq q^{m_i}-2$.
\begin{enumerate}
\item[(1)] If all the multiplicative characters $\lambda_1,\ldots,\lambda_k$ are trivial, then
$$\widetilde{J}_{a}(\lambda_{1},\ldots,\lambda_{k})=\frac{1}{q}\left((q^{m_1}-1)\cdots(q^{m_k}-1)+(-1)^{k+1}\right).$$
\item[(2)] If $\lambda_1,\ldots,\lambda_h$ are nontrivial, $\lambda_{h+1},\ldots,\lambda_k$ are trivial and $t_1+\cdots+t_h\equiv 0\pmod{q-1}$,$1\leq h \leq k-1$,  then
$$|\widetilde{J}_{a}(\lambda_{1},\ldots,\lambda_{k})|=q^{\frac{m_1+\cdots+m_h-2}{2}}.$$
\item[(3)] If $\lambda_1,\ldots,\lambda_h$ are nontrivial, $\lambda_{h+1},\ldots,\lambda_k$ are trivial and $t_1+\cdots+t_h\not\equiv 0\pmod{q-1}$, $1\leq h \leq k-1$, then
$$|\widetilde{J}_{a}(\lambda_{1},\ldots,\lambda_{k})|=q^{\frac{m_1+\cdots+m_h-1}{2}}.$$
\item[(4)] If all the multiplicative characters $\lambda_1,\ldots,\lambda_k$ are nontrivial and $t_1+\cdots+t_k\equiv 0\pmod{q-1}$, then
$$|\widetilde{J}_{a}(\lambda_{1},\ldots,\lambda_{k})|=q^{\frac{m_1+\cdots+m_k-2}{2}}.$$
\item[(5)] If all the multiplicative characters $\lambda_1,\ldots,\lambda_k$ are nontrivial and $t_1+\cdots+t_k\not\equiv 0\pmod{q-1}$, then
$$|\widetilde{J}_{a}(\lambda_{1},\ldots,\lambda_{k})|=q^{\frac{m_1+\cdots+m_k-1}{2}}.$$
\end{enumerate}
\end{thm}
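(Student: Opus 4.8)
The plan is to derive Theorem \ref{th-main2} as a direct consequence of the two results it explicitly combines, namely the evaluation of the generalized Jacobi sums $\widehat{J}_{a}$ in Theorem \ref{th-main1} and the reduction formula of Lemma \ref{lem-relation}, together with the cardinality count in Lemma \ref{lem-value}. The strategy is to split into the same five cases that appear in the statement, matching each to the appropriate branch of the earlier lemmas, and to observe that in every case the absolute value of $\widetilde{J}_{a}$ is governed solely by the \emph{nontrivial} characters among $\lambda_1,\ldots,\lambda_k$.

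First I would dispose of case (1), where all $\lambda_i$ are trivial: here $\widetilde{J}_{a}(\lambda_1,\ldots,\lambda_k)=\sum_{(c_1,\ldots,c_k)\in\widetilde{S}}1=|\widetilde{S}|$, so the value is read off directly from Lemma \ref{lem-value}(1) for $a\in\Bbb F_q^{*}$, giving $\frac{1}{q}\big((q^{m_1}-1)\cdots(q^{m_k}-1)+(-1)^{k+1}\big)$. Next I would treat the mixed cases (2) and (3), where exactly $\lambda_1,\ldots,\lambda_h$ are nontrivial and the rest trivial with $1\leq h\leq k-1$. Here Lemma \ref{lem-relation}(2) applies and yields $\widetilde{J}_{a}(\lambda_1,\ldots,\lambda_k)=(-1)^{k-h}\widehat{J}_{a}(\lambda_1,\ldots,\lambda_h)$; since $|(-1)^{k-h}|=1$, taking absolute values gives $|\widetilde{J}_{a}|=|\widehat{J}_{a}(\lambda_1,\ldots,\lambda_h)|$. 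Now the characters $\lambda_1,\ldots,\lambda_h$ are all nontrivial, so Theorem \ref{th-main1}(3)--(4) applies to the truncated generalized Jacobi sum over $\Bbb F_{q^{m_1}}\times\cdots\times\Bbb F_{q^{m_h}}$: the exponent condition $t_1+\cdots+t_h\equiv 0\pmod{q-1}$ produces $q^{(m_1+\cdots+m_h-2)/2}$, and the opposite congruence produces $q^{(m_1+\cdots+m_h-1)/2}$, matching (2) and (3) exactly.

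Finally, for cases (4) and (5) all the $\lambda_i$ are nontrivial, so Lemma \ref{lem-relation}(1) gives the clean identity $\widetilde{J}_{a}(\lambda_1,\ldots,\lambda_k)=\widehat{J}_{a}(\lambda_1,\ldots,\lambda_k)$, and I would simply invoke Theorem \ref{th-main1}(3)--(4) directly with the full index set, so that $t_1+\cdots+t_k\equiv 0\pmod{q-1}$ gives $q^{(m_1+\cdots+m_k-2)/2}$ and the non-congruence gives $q^{(m_1+\cdots+m_k-1)/2}$. This completes all five cases.

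There is essentially no hard analytic obstacle here, since every nontrivial step has already been carried out in the preceding lemmas; the only point requiring care is purely bookkeeping, namely making sure the index set over which the surviving Jacobi sum is evaluated is exactly $\{1,\ldots,h\}$ (respectively $\{1,\ldots,k\}$) and that the exponent congruence in the hypotheses is stated relative to that same set. In particular one must check that in the mixed cases the congruence condition involves $t_1+\cdots+t_h$ rather than the full sum $t_1+\cdots+t_k$, which is consistent because the trivial characters correspond to $t_i\equiv 0\pmod{q-1}$ and therefore do not alter the residue class; I would flag this consistency explicitly so that the application of Theorem \ref{th-main1} to the truncated tuple is unambiguous.
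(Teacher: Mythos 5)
Your proposal is correct and follows exactly the route the paper takes: the paper's entire proof is the one-line remark that Theorem \ref{th-main2} follows by ``combining Theorem \ref{th-main1} and Lemma \ref{lem-relation},'' which is precisely your case analysis (Lemma \ref{lem-relation}(2) plus Theorem \ref{th-main1}(3)--(4) for the mixed cases, Lemma \ref{lem-relation}(1) plus Theorem \ref{th-main1}(3)--(4) for the all-nontrivial cases). Your explicit appeal to Lemma \ref{lem-value}(1) for case (1), and your consistency check that trivial characters contribute $t_i=0$ so the truncated congruence is the right one, are details the paper leaves implicit but are the same argument, not a different one.
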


\section{Nearly optimal codebooks based on generalized Jacobi sums and related character sums}
In this section, we present a construction of codebooks with multiplicative characters of finite fields. We follow the notations in Section 3.

Let $\Bbb F_{q^{m_1}},\ldots,\Bbb F_{q^{m_k}}$, be any $k$ finite fields, where $m_1,\ldots,m_k$ are any $k$ positive integers. For an nonempty set
$$S\subseteq \Bbb F_{q^{m_1}}\times \cdots \times\Bbb F_{q^{m_k}},$$
let $K:=|S|$.

Let $\mathcal{E}_{K}$ denote the set formed by the standard basis of the $K$-dimensional Hilbert space:
\begin{eqnarray*}
& (1,0,0,\cdots,0,0),\\
& (0,1,0,\cdots,0,0),\\
& \vdots \\
& (0,0,0,\cdots,0,1).
\end{eqnarray*}

For any $\lambda_i\in \widehat{\Bbb F}_{q^{m_i}}^{*}$, $i=1,2,\ldots,k$, we define a unit-norm codeword of length $K$ by
\begin{eqnarray}\label{eqn-defn-codeword}\textbf{c}(\lambda_1,\ldots,\lambda_k)=\frac{1}{\sqrt{n_{\bar{\textbf{c}}(\lambda_1,\ldots,\lambda_k)}}}
\left(\lambda_1(c_1)\cdots\lambda_k(c_k)\right)_{(c_1,\ldots,c_k)\in S},\end{eqnarray}
where $n_{\bar{\textbf{c}}(\lambda_1,\ldots,\lambda_k)}$ denotes the Euclidean norm of the vector
$$\bar{\textbf{c}}(\lambda_1,\ldots,\lambda_k):=\left(\lambda_1(c_1)\cdots\lambda_k(c_k)\right)_{(c_1,\ldots,c_k)\in S}.$$

Now we present a generic construction of codebooks as
\begin{eqnarray}\label{eqn-mycodebook}
\mathcal{C}=\{\textbf{c}(\lambda_1,\ldots,\lambda_k):\lambda_i\in \widehat{\Bbb F}_{q^{m_i}}^{*}\text{ for }i=1,2,\ldots,k\}\cup\mathcal{E}_{K}.
\end{eqnarray}
We call $S$ the defining set of $\mathcal{C}$. It is clear that $\mathcal{C}$ has $N=\prod_{i=1}^{k}(q^{m_i}-1)+K$ codewords. If the defining set $S$ is properly selected, then $\mathcal{C}$ may have good parameters with respect to the Welch or the Levenshtein bound.
\subsection{When $S=\widehat{S}$}
In the following, we investigate $I_{\max}(\mathcal{C})$ if we select $S=\widehat{S}$, where $\widehat{S}$ is defined in Section 3 for $a\in \Bbb F_{q}^{*}$. Then $K=|\widehat{S}|=q^{m_1+\cdots+m_k-1}$. Now we consider the value of $n_{\bar{\textbf{c}}(\lambda_1,\ldots,\lambda_k)}$ defined in Equation (\ref{eqn-defn-codeword}). It is easy to verify that
\begin{eqnarray}\label{ineqn-bound}|\widetilde{S}|\leq n_{\bar{\textbf{c}}(\lambda_1,\ldots,\lambda_k)} \leq |\widehat{S}|,\end{eqnarray}
where $|\widetilde{S}|=\frac{1}{q}\left((q^{m_1}-1)\cdots(q^{m_k}-1)+(-1)^{k+1}\right)$ by Lemma \ref{lem-value} and $|\widehat{S}|=q^{m_1+\cdots+m_k-1}$. We remark that $n_{\bar{\textbf{c}}(\lambda_1,\ldots,\lambda_k)}$ achieves the lower bound of Inequality (\ref{ineqn-bound}) if all of $\lambda_1,\cdots,\lambda_k$ are nontrivial, and achieves the upper bound of Inequality (\ref{ineqn-bound}) if all of $\lambda_1,\cdots,\lambda_k$ are trivial.

\begin{thm}\label{th-codebook1}
If $S=\widehat{S}$ and $q\geq 4$, the codebook $\mathcal{C}$ in Equation (\ref{eqn-mycodebook}) has parameters
$$\left(\prod_{i=1}^{k}(q^{m_i}-1)+q^{m_1+\cdots+m_k-1},q^{m_1+\cdots+m_k-1}\right)$$ and
$$I_{\max}(\mathcal{C})=\frac{q^{\frac{m_1+\cdots+m_k+1}{2}}}{\prod_{i=1}^{k}(q^{m_i}-1)+(-1)^{k+1}}.$$
\end{thm}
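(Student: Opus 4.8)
The plan is to read off $N$ and $K$ directly and then compute $I_{\max}(\mathcal C)$ by running through the three kinds of pairs of distinct codewords. Here $K=|\widehat S|=q^{m_1+\cdots+m_k-1}$ and $N=\prod_{i=1}^k(q^{m_i}-1)+K$ by construction; write $M=m_1+\cdots+m_k$. Two distinct vectors of $\mathcal E_K$ are orthogonal, and the correlation of a character codeword $\textbf c(\lambda_1,\ldots,\lambda_k)$ with a standard basis vector equals $n_{\bar{\textbf c}}^{-1/2}$ times a single product $\lambda_1(c_1)\cdots\lambda_k(c_k)$, hence has modulus $0$ or $n_{\bar{\textbf c}}^{-1/2}$. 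Using $n_{\bar{\textbf c}}\ge|\widetilde S|$ from (\ref{ineqn-bound}) and $|\widetilde S|\le q^{M-1}$ from Lemma \ref{lem-value}, this is at most $|\widetilde S|^{-1/2}\le q^{(M-1)/2}/|\widetilde S|=I_{\max}(\mathcal C)$, the last equality being the stated value since $|\widetilde S|=\tfrac1q(\prod_i(q^{m_i}-1)+(-1)^{k+1})$ by Lemma \ref{lem-value}. So these pairs cannot exceed the claimed bound.

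The core of the argument is the correlation of two distinct character codewords, which is $\tfrac1{\sqrt{n\,n'}}\sum_{(c_1,\ldots,c_k)\in\widehat S}\prod_{i=1}^k\lambda_i(c_i)\overline{\lambda_i'(c_i)}$; I would set $\mu_i=\lambda_i\overline{\lambda_i'}$ and analyse the numerator with the extended characters of (\ref{eqn-extend}). A factor $\lambda_i(c_i)\overline{\lambda_i'(c_i)}$ vanishes unless $c_i\ne0$ or both $\lambda_i,\lambda_i'$ are trivial, and equals $\mu_i(c_i)$ when $c_i\ne0$. Feeding this into the additive-character orthogonality exactly as in the proof of Theorem \ref{th-main1}, I expect a dichotomy: if some index carries two trivial characters, the surviving $y=0$ term contains a factor $\sum_{c_i\ne0}\mu_i(c_i)=0$ (distinctness forces some remaining $\mu_i$ to be nontrivial), so the numerator is $0$; otherwise the support forces every $c_i\ne0$ and the numerator collapses to $\widetilde J_a(\mu_1,\ldots,\mu_k)$. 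Thus such a correlation is either $0$ or $|\widetilde J_a(\mu_1,\ldots,\mu_k)|/\sqrt{n\,n'}$.

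To bound the latter I would evaluate the norms by the same orthogonality computation: if the nontrivial characters of a codeword are indexed by $P$, then $n=\tfrac1q\prod_{i\in P}(q^{m_i}-1)\prod_{i\notin P}q^{m_i}$ when $P\ne\{1,\ldots,k\}$ and $n=|\widetilde S|$ when $P=\{1,\ldots,k\}$; comparing integers gives $n\ge|\widetilde S|$ in every case. Since Theorem \ref{th-main2} yields $|\widetilde J_a(\mu_1,\ldots,\mu_k)|\le q^{(M-1)/2}$ whenever the codewords differ, each character-codeword correlation is at most $q^{(M-1)/2}/\sqrt{n\,n'}\le q^{(M-1)/2}/|\widetilde S|=I_{\max}(\mathcal C)$. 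It remains to realise equality: choosing all $\lambda_i,\lambda_i'$ nontrivial with every $\mu_i$ nontrivial and $\sum_i(t_i-t_i')\not\equiv0\pmod{q-1}$ gives $n=n'=|\widetilde S|$ and $|\widetilde J_a(\mu_1,\ldots,\mu_k)|=q^{(M-1)/2}$ by Theorem \ref{th-main2}(5). Such a choice needs two distinct nontrivial characters in each $\widehat{\Bbb F}_{q^{m_i}}^*$, i.e.\ $q^{m_i}\ge4$, which is exactly what $q\ge4$ guarantees; the congruence is then arranged by adjusting one exponent. I expect the main obstacle to be the bookkeeping for the partly-trivial configurations in the last two paragraphs---checking that the smaller Jacobi-sum numerators there are always outweighed by the larger normalizing norms---together with confirming that the extremal pair exists, which is precisely where the hypothesis $q\ge4$ is used.
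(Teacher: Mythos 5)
Your proposal is correct, but it routes the key step differently from the paper --- and, in fact, more carefully. For a pair of distinct character codewords the paper simply asserts $\textbf{c}'\textbf{c}''^{H}=\frac{1}{\sqrt{n'n''}}\widehat{J}_{a}(\lambda_{1},\ldots,\lambda_{k})$ with $\lambda_{i}=\lambda_{i}'\overline{\lambda_{i}''}$, quotes Theorem \ref{th-main1}, and takes the norm bounds (\ref{ineqn-bound}) as ``easy to verify''. That identification silently replaces $\lambda_{i}'(c_{i})\overline{\lambda_{i}''(c_{i})}$ by the extended-character value $(\lambda_{i}'\overline{\lambda_{i}''})(c_{i})$, which is wrong at $c_{i}=0$ whenever $\lambda_{i}'=\lambda_{i}''$ is nontrivial: the former is $0$, the latter is $1$ by (\ref{eqn-extend}). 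For instance, with $k=2$, $m_{1}=m_{2}=1$, $q=5$, $\lambda_{1}'=\lambda_{1}''=\phi$, $\lambda_{2}'=\phi$, $\lambda_{2}''=\phi^{2}$, the true inner-product numerator is $-\overline{\phi}(a)$ (modulus $1$), whereas $\widehat{J}_{a}(\psi_{0},\overline{\phi})=0$ by Theorem \ref{th-main1}(2). Your dichotomy --- numerator $0$ if some index carries two trivial characters, and otherwise numerator exactly $\widetilde{J}_{a}(\mu_{1},\ldots,\mu_{k})$ because every entry with some $c_{i}=0$ vanishes --- is the correct statement, and your route through Theorem \ref{th-main2}, combined with the exact norms $n=\frac{1}{q}\prod_{i\in P}(q^{m_{i}}-1)\prod_{i\notin P}q^{m_{i}}$ ($P$ the set of indices carrying nontrivial characters), whence $n\geq|\widetilde{S}|$, yields the same bound $q^{(M-1)/2}/|\widetilde{S}|$ (writing $M=m_{1}+\cdots+m_{k}$) for every pair, including the mixed pairs with $\mathcal{E}_{K}$; this is precisely the stated $I_{\max}(\mathcal{C})$. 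So your approach buys rigor where the paper is sloppy (and proves (\ref{ineqn-bound}) rather than assuming it), while the paper's approach buys brevity and, in the attainment step, complete explicitness: it exhibits the extremal pairs via powers $s$ and $s+1$ of the $\phi_{i}$, treated separately for $k$ odd and even, with exponent sums $\equiv-1$ resp.\ $-2\pmod{q-1}$. One small caution on your attainment step: ``adjusting one exponent'' does not always work, since for $q=4$ and $m_{i}=1$ the group $\widehat{\mathbb{F}}_{4}^{*}$ has only two nontrivial characters, so changing a single $t_{i}$ in a bad configuration forces $\mu_{i}$ to become trivial; instead one swaps the pair at one index (changing the exponent sum by $\pm2\not\equiv0\pmod{3}$), or quotes the paper's explicit choice. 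That is a one-line repair, not a gap.
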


\begin{proof}
Let $\textbf{c}',\textbf{c}''\in \mathcal{C}$ be any two different codewords. Denote $\mathcal{F}=\mathcal{C}\backslash \mathcal{E}_{K}$. Now we calculate the correlation of $\textbf{c}'$ and $\textbf{c}''$ in the following cases.
\begin{enumerate}
\item[(1)] If $\textbf{c}',\textbf{c}''\in \mathcal{E}_{K}$, we directly have $|\textbf{c}'\textbf{c}''^{H}|=0$.
\item[(2)] If $\textbf{c}'\in \mathcal{F},\textbf{c}''\in \mathcal{E}_{K}$ or $\textbf{c}'\in \mathcal{E}_{K},\textbf{c}''\in \mathcal{F}$, we have
$$|\textbf{c}'\textbf{c}''^{H}|=\frac{1}{\sqrt{n_{\bar{\textbf{c}}(\lambda_1,\ldots,\lambda_k)}}}
\left|\lambda_1(c_1)\cdots\lambda_k(c_k)\right|=\frac{1}{\sqrt{n_{\bar{\textbf{c}}(\lambda_1,\ldots,\lambda_k)}}}$$ for some $(c_1,\ldots,c_k)\in \widehat{\mathcal{S}}$ and $\lambda_i\in\widehat{ \Bbb F}_{q^{m_i}}^{*}$, $i=1,2,\ldots,k$. By the Inequality (\ref{ineqn-bound}), we have
$$\frac{1}{\sqrt{q^{m_1+\cdots+m_k-1}}}\leq|\textbf{c}'\textbf{c}''^{H}|\leq \sqrt{\frac{q}{(q^{m_1}-1)\cdots(q^{m_k}-1)+(-1)^{k+1}}}.$$ Both the lower bound and the upper bound of this inequality can be achieved.
\item[(3)] If $\textbf{c}',\textbf{c}''\in \mathcal{F}$, we assume that $\textbf{c}'=\textbf{c}(\lambda_1',\ldots,\lambda_k')$ and $\textbf{c}''=\textbf{c}(\lambda_1'',\ldots,\lambda_k'')$ with $(\lambda_{1}',\ldots,\lambda_{k}')\neq(\lambda_{1}'',\ldots,\lambda_{k}'')$ where $\lambda_i',\lambda_{i}''\in \widehat{\Bbb F}_{q^{m_i}}^{*}$ for $i=1,2,\ldots,k$.  Denote $\lambda_{i}=\lambda_i'\overline{\lambda_i''}$ for all $i=1,2,\ldots,k$. Then by Equation (\ref{eqn-defn-codeword}) we have
    \begin{eqnarray*}\textbf{c}'\textbf{c}''^{H}
    &=&\frac{1}{\sqrt{n_{\bar{\textbf{c}}(\lambda_1',\ldots,\lambda_k')}}}\cdot
    \frac{1}{\sqrt{n_{\bar{\textbf{c}}(\lambda_1'',\ldots,\lambda_k'')}}}\sum_{\substack{(c_1,\ldots,c_k)\in \widehat{S}}}\lambda_{1}(c_1)\cdots \lambda_k(c_k)\\
    &=&\frac{1}{\sqrt{n_{\bar{\textbf{c}}(\lambda_1',\ldots,\lambda_k')}}}\cdot
    \frac{1}{\sqrt{n_{\bar{\textbf{c}}(\lambda_1'',\ldots,\lambda_k'')}}}\widehat{J}_{a}(\lambda_{1},\ldots,\lambda_{k}).\end{eqnarray*}
    Since $(\lambda_{1}',\ldots,\lambda_{k}')\neq(\lambda_{1}'',\ldots,\lambda_{k}'')$, not all of $\lambda_{i},i=1,2,\cdots,k$ are trivial characters. Hence, by Theorem \ref{th-main1}, we have
    $$|\textbf{c}'\textbf{c}''^{H}|\in \left\{0,
    \frac{q^{\frac{m_1+\cdots+m_k-2}{2}}}{\sqrt{n_{\bar{\textbf{c}}(\lambda_1',\ldots,\lambda_k')}}\cdot \sqrt{n_{\bar{\textbf{c}}(\lambda_1'',\ldots,\lambda_k'')}}},
    \frac{q^{\frac{m_1+\cdots+m_k-1}{2}}}{\sqrt{n_{\bar{\textbf{c}}(\lambda_1',\ldots,\lambda_k')}}\cdot \sqrt{n_{\bar{\textbf{c}}(\lambda_1'',\ldots,\lambda_k'')}}}\right\}.$$ Due to the lower bound of Inequality (IV.3), we obtain
    \begin{eqnarray}\label{ineqn-proof}
    |\textbf{c}'\textbf{c}''^{H}|\leq \frac{q^{\frac{m_1+\cdots+m_k+1}{2}}}{(q^{m_1}-1)\cdots(q^{m_k}-1)+(-1)^{k+1}}.
    \end{eqnarray}
    In the following, we prove that there indeed exist $\textbf{c}',\textbf{c}''\in \mathcal{F}$ such that $|\textbf{c}'\textbf{c}''^{H}|$ achieves the upper bound in Inequality (\ref{ineqn-proof}). Due to Theorem \ref{th-main1} and Inequality (\ref{ineqn-bound}), it is sufficient to prove that there exist $(\lambda_{1}',\ldots,\lambda_{k}'),(\lambda_{1}'',\ldots,\lambda_{k}'')$ such that
    \begin{enumerate}
    \item[$\bullet$] all of $\lambda_{1}',\ldots,\lambda_{k}'$ are nontrivial,
    \item[$\bullet$] all of $\lambda_{1}'',\ldots,\lambda_{k}''$ are nontrivial,
    \item[$\bullet$] all of $\lambda_{1},\ldots,\lambda_{k}$ are nontrivial and $t_1+\cdots+t_k\not\equiv 0\pmod{q-1}$, where $\lambda_{i}=\lambda_i'\overline{\lambda_i''}$, $\widehat{\Bbb F}_{q^{m_i}}^{*}=\langle\phi_i\rangle$ and $\lambda_i=\phi_{i}^{t_i}$ for $i=1,2,\ldots,k$ and $0\leq t_i \leq q^{m_i}-2$.
    \end{enumerate}
    In fact, we can choose a positive integer $s$ such that $$0<s\leq \min\{q^{m_i}-2:i=1,\cdots,k\}\text{ and }0<s+1\leq \min\{q^{m_i}-2:i=1,\cdots,k\}.$$
    Firstly, we assume that $k=2t+1$ for some nonnegative integer $t$. Let
    $$(\lambda_1',\cdots,\lambda_t',\lambda_{t+1}',\cdots,\lambda_{2t}',\lambda_{2t+1}')=(\phi_{1}^{s},\cdots,\phi_{t}^{s},\phi_{t+1}^{s+1},
    \cdots,\phi_{2t}^{s+1},\phi_{2t+1}^{s})$$ and $$(\lambda_1'',\cdots,\lambda_t'',\lambda_{t+1}'',\cdots,\lambda_{2t}'',\lambda_{2t+1}'')=(\phi_{1}^{s+1},\cdots,\phi_{t}^{s+1},\phi_{t+1}^{s},
    \cdots,\phi_{2t}^{s},\phi_{2t+1}^{s+1}).$$ Then
    $$(\lambda_1,\cdots,\lambda_t,\lambda_{t+1},\cdots,\lambda_{2t},\lambda_{2t+1})=(\phi_{1}^{-1},\cdots,\phi_{t}^{-1},\phi_{t+1}^{1},
    \cdots,\phi_{2t}^{1},\phi_{2t+1}^{-1}),$$ which implies that
    $$t_1+\cdots+t_k\equiv -1\pmod{q-1}.$$
   Secondly, we assume that $k=2t$ for some positive integer $t$. Let
    $$(\lambda_1',\cdots,\lambda_{t-1}',\lambda_{t}',\cdots,\lambda_{2t-2}',\lambda_{2t-1}',\lambda_{2t}')=(\phi_{1}^{s},\cdots,\phi_{t-1}^{s},\phi_{t}^{s+1},
    \cdots,\phi_{2t-2}^{s+1},\phi_{2t-1}^{s},\phi_{2t}^{s})$$ and $$(\lambda_1'',\cdots,\lambda_{t-1}'',\lambda_{t}'',\cdots,\lambda_{2t-2}'',\lambda_{2t-1}'',\lambda_{2t}'')
    =(\phi_{1}^{s+1},\cdots,\phi_{t-2}^{s+1},\phi_{t-1}^{s},
    \cdots,\phi_{2t-2}^{s},\phi_{2t-1}^{s+1},\phi_{2t}^{s+1}).$$ Then
    $$(\lambda_1,\cdots,\lambda_{t-2},\lambda_{t-1},\cdots,\lambda_{2t-2},\lambda_{2t-1},\lambda_{2t})=(\phi_{1}^{-1},\cdots,\phi_{t-2}^{-1},\phi_{t-1}^{1},
    \cdots,\phi_{2t-2}^{1},\phi_{2t-1}^{-1},\phi_{2t}^{-1}),$$ which implies that
    $$t_1+\cdots+t_k\equiv -2\pmod{q-1}.$$ Thus we have proved that there indeed exist $\textbf{c}',\textbf{c}''\in \mathcal{F}$ such that
    $$|\textbf{c}'\textbf{c}''^{H}|=\frac{q^{\frac{m_1+\cdots+m_k+1}{2}}}{(q^{m_1}-1)\cdots(q^{m_k}-1)+(-1)^{k+1}}.$$
\end{enumerate}
Summarizing the conclusions in the three cases above, we obtain
$$I_{\max}(\mathcal{C})=\frac{q^{\frac{m_1+\cdots+m_k+1}{2}}}{\prod_{i=1}^{k}(q^{m_i}-1)+(-1)^{k+1}}$$ for $q\geq 4$.
\end{proof}
Theorem \ref{th-codebook1} contains Theorem 19 in \cite{HDY} as a special case. If not all of $m_i,i=1,\cdots,k$, are equal to 1, then the parameters of $\mathcal{C}$ in Theorem \ref{th-codebook1} are different to those of the codebook in Theorem 19 of \cite{HDY}.

\begin{cor}\label{cor}
For the codebook $\mathcal{C}$ in Theorem \ref{th-codebook1}, the followings hold.
\begin{enumerate}
\item If $k=1$ and $m_1=2$, then $\mathcal{C}$ is nearly optimal with respect to the Levenshtein bound.
\item If $k=1,m_1>2$ or $k>1$, then $\mathcal{C}$ is nearly optimal with respect to the Welch bound.
\end{enumerate}
\end{cor}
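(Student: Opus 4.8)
The plan is to collapse the parameters into one asymptotic variable and then compare $I_{\max}(\mathcal{C})$ to the appropriate bound by a direct computation. Set $M=m_1+\cdots+m_k$ and $P=\prod_{i=1}^{k}(q^{m_i}-1)$, so that Theorem~\ref{th-codebook1} reads $K=q^{M-1}$, $N=P+q^{M-1}$ and $I_{\max}(\mathcal{C})=q^{(M+1)/2}/(P+(-1)^{k+1})$. Writing $P=q^{M}\prod_{i=1}^{k}(1-q^{-m_i})$, the ratio $r:=P/q^{M}=\prod_{i=1}^{k}(1-q^{-m_i})$ tends to $1$ as $q\to\infty$ while $k$ and the $m_i$ stay fixed; since $N\to\infty$ along this family, all limits below are taken as $q\to\infty$.

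First I would decide which bound is the pertinent one in each case, by comparing $N$ with $K^{2}=q^{2M-2}$ and recalling that the Welch bound is loose exactly when $N>K^{2}$. In case~(1), $k=1,m_1=2$ gives $M=2$, hence $K=q$ and $N=q^{2}+q-1>q^{2}=K^{2}$, so the complex Levenshtein bound~(\ref{eqn-I.3}) is the relevant one. In case~(2) one has either $k=1,m_1\geq 3$ or $k>1$; using $P<q^{M}$ I would bound $N<2q^{M}\leq q^{M+1}\leq q^{2M-2}=K^{2}$ whenever $M\geq 3$ (since $2M-2\geq M+1$ for $M\geq 3$ and $q\geq 2$), and dispose of the single residual possibility $M=2$, $k=2$ by the direct estimate $N=(q-1)^{2}+q=q^{2}-q+1<q^{2}=K^{2}$. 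Thus in case~(2) the Welch bound~(\ref{eqn-I.1}) governs.

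In case~(1) the computation is fully explicit: here $I_{\max}(\mathcal{C})=q^{3/2}/q^{2}=1/\sqrt{q}$, and (\ref{eqn-I.3}) gives
$$I_{L}=\sqrt{\frac{2N-K^{2}-K}{(N-K)(K+1)}}=\sqrt{\frac{(q-1)(q+2)}{(q-1)(q+1)^{2}}}=\frac{\sqrt{q+2}}{q+1}.$$
Hence $I_{\max}(\mathcal{C})/I_{L}=(q+1)/\sqrt{q(q+2)}=(q+1)/\sqrt{(q+1)^{2}-1}\to 1$, which is the asserted near-optimality with respect to the Levenshtein bound.

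For case~(2) I would pass to the squared ratio against the Welch bound. From $N-K=P$ and $N-1=P+q^{M-1}-1$ one obtains
$$\frac{I_{\max}(\mathcal{C})^{2}}{I_{W}^{2}}=\frac{q^{M+1}(P+q^{M-1}-1)q^{M-1}}{(P+(-1)^{k+1})^{2}P}=\frac{q^{2M}(P+q^{M-1}-1)}{(P+(-1)^{k+1})^{2}P}.$$
Dividing numerator and denominator by $q^{3M}$ and substituting $P=rq^{M}$ rewrites the right-hand side as
$$\frac{r+q^{-1}-q^{-M}}{\left(r+(-1)^{k+1}q^{-M}\right)^{2}r},$$
which tends to $1$ as $q\to\infty$ because $r\to 1$ and $q^{-1},q^{-M}\to 0$; taking square roots yields $I_{\max}(\mathcal{C})/I_{W}\to 1$. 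The argument carries no deep obstacle: the only points needing care are the case analysis that pins down the correct (tighter) bound through the $N$ versus $K^{2}$ comparison, and the fact that the limit must be taken as $q\to\infty$ — it is precisely the term $q^{-1}$ above, which survives any attempt to let $N\to\infty$ through the $m_i$ with $q$ fixed, that forces this regime.
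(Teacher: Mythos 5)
Your proposal is correct and takes essentially the same route as the paper: case (1) is settled by the identical explicit computation ($I_{\max}(\mathcal{C})=1/\sqrt{q}$, $I_L=\sqrt{q+2}/(q+1)$, ratio $\to 1$), and case (2) by first checking $N<K^{2}$ so the Welch bound governs and then computing the limit of $I_{\max}(\mathcal{C})/I_W$ as $q\to\infty$. The only difference is that you supply the details the paper leaves implicit --- the inequality $N<K^{2}$ (your $2q^{M}\leq q^{M+1}\leq q^{2M-2}$ argument plus the residual $k=2$, $m_1=m_2=1$ case) and the actual limit computation via $r=P/q^{M}\to 1$ --- which correspond precisely to the paper's unproved ``we can deduce that $N<K^{2}$'' and ``it is easy to see'' claims.
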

\begin{proof}
If $k=1$ and $m_1=2$, then $\mathcal{C}$ is a $(q^{2}+q-1,q)$ codebook with $I_{\max}(\mathcal{C})=\frac{1}{\sqrt{q}}$. In this case, we have $N>K^{2}$ and $I_{L}=\frac{\sqrt{q+2}}{q+1}$ by Lemma 1.2. Then we have $\lim_{q\rightarrow \infty}\frac{I_{\max}(\mathcal{C})}{I_{L}}=1$ which implies that $\mathcal{C}$ is nearly achieving the Levenshtein bound.

If $k=1,m_1>2$ or $k>1$, we can deduce that $N<K^{2}$ and
$$I_{W}=\sqrt{\frac{\prod_{i=1}^{k}(q^{m_i}-1)}{(\prod_{i=1}^{k}(q^{m_i}-1)+q^{m_1+\cdots+m_k-1}-1)q^{m_1+\cdots+m_k-1}}}$$
by Lemma 1.1. By Theorem \ref{th-codebook1}, it is easy  to see that $\lim_{q\rightarrow \infty}\frac{I_{\max}(\mathcal{C})}{I_{W}}=1$ which implies that $\mathcal{C}$ is nearly achieving the Welch bound.
\end{proof}

\begin{rem}
In Theorem \ref{th-codebook1}, let $k=2$ and $m_1=1,m_2=2$. Then $\mathcal{C}$ has parameters $(q^3-q+1,q^2)$ and $$I_{\max}(\mathcal{C})=\frac{q^2}{(q-1)(q^2-1)-1}.$$ For a $(q^3-q+1,q^2)$ codebook, Lemma \ref{lem-I.1} implies that $$I_W=\sqrt{\frac{q-1}{q^3}}.$$
In Table I, we list the parameters of some new specific codebooks for $k=2$ and $m_1=1,m_2=2$. From this table, we know that $I_{\max}$ becomes very small for large enough $q$. It can been seen that $I_{\max}$ is very close to $I_W$ given by the Welch bound for large enough $q$, which means that our codebooks are indeed nearly optimal. In particular, the larger the value of $q$ is, the smaller the difference between $I_W/I_{\max}$ and 1 is. These demonstrate that our codebooks should have a good applicability in communications.

\[ \begin{tabular} {c} Table I. Some new nearly optimal codebooks for $k=2$\\ and $m_1=1,m_2=2$ in Theorem \ref{th-codebook1}\\
{\begin{tabular}{cccccc}
  \hline\hline
 $q$ & $N$ & $K$ &  $I_{\max}$ & $I_W$ & $I_W/I_{\max}$\\
  \hline
  $4$ & $61$ & $16$ &  $0.363636$ & $0.216506$ & $0.595392$\\
  $5$ & $121$ & $25$ & $0.263158$ & $0.178885$ & $0.679763$\\
  $7$ & $337$ & $49$ & $0.170732$ & $0.132260$ & $0.774664$\\
  $9$ & $721$ & $81$ & $0.126761$ & $0.104756$ & $0.826406$\\
  $11$ & $1321$ & $121$ & $0.100917$ & $0.086678$ & $0.858904$\\
  $13$ & $2185$ & $169$ & $0.083871$ & $0.073905$ & $0.881175$\\
  $16$ & $4081$ & $256$ & $0.066946$ & $0.060515$ & $0.903938$\\
  $23$ & $12145$ & $529$ & $0.045545$ & $0.042523$ & $0.933648$\\
  $49$ & $117601$ & $2401$ & $0.020842$ & $0.020199$ & $0.969149$\\
  $81$ & $531361$ & $6561$ & $0.012502$ & $0.012269$ & $0.981381$\\
  $121$ & $1771441$ & $14641$ & $0.008334$ & $0.008230$ & $0.987521$\\
  \hline\hline
\end{tabular}}
\end{tabular}
\]
\end{rem}

\subsection{When $S=\widetilde{S}$}
In the following, we investigate $I_{\max}(\mathcal{C})$ if we select $S=\widetilde{S}$, where $\widetilde{S}$ is defined in Section 3 for $a\in \Bbb F_{q}^{*}$. Then $K=|\widetilde{S}|=\frac{1}{q}\left((q^{m_1}-1)\cdots(q^{m_k}-1)+(-1)^{k+1}\right)$ by Lemma \ref{lem-value}.  It is clear that $n_{\bar{\textbf{c}}(\lambda_1,\ldots,\lambda_k)}=K$ in Equation (\ref{eqn-defn-codeword}) for any $\lambda_i\in \widehat{\Bbb F}_{q^{m_i}}^{*}$, $i=1,2,\cdots,k$.

\begin{thm}\label{th-mycodbook2}
If $S=\widetilde{S}$, the codebook $\mathcal{C}$ in Equation (\ref{eqn-mycodebook}) has parameters
$$\left(\prod_{i=1}^{k}(q^{m_i}-1)+\frac{1}{q}\left(\prod_{i=1}^{k}(q^{m_i}-1)+(-1)^{k+1}\right),
\frac{1}{q}\left(\prod_{i=1}^{k}(q^{m_i}-1)+(-1)^{k+1}\right)\right)$$ and
$$I_{\max}(\mathcal{C})=\frac{q^{\frac{m_1+\cdots+m_k+1}{2}}}{\prod_{i=1}^{k}(q^{m_i}-1)+(-1)^{k+1}}.$$
\end{thm}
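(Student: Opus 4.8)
The statement is the direct analogue of Theorem~\ref{th-codebook1} but with the defining set $S=\widetilde{S}$ in place of $\widehat{S}$, so the plan is to mirror the proof of Theorem~\ref{th-codebook1} and exploit the key simplification that comes from $\widetilde{S}$: since every coordinate index now ranges over $\Bbb F_{q^{m_i}}^{*}$ rather than $\Bbb F_{q^{m_i}}$, each entry $\lambda_1(c_1)\cdots\lambda_k(c_k)$ of the raw vector $\bar{\textbf c}(\lambda_1,\ldots,\lambda_k)$ has modulus exactly $1$ (the extended value $\psi(0)=0$ never occurs). Hence $n_{\bar{\textbf c}(\lambda_1,\ldots,\lambda_k)}=|\widetilde S|=K$ for \emph{every} tuple $(\lambda_1,\ldots,\lambda_k)$, which is precisely the remark preceding the theorem. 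This removes the two-sided bound (\ref{ineqn-bound}) that complicated the earlier proof: all normalizing constants are now the single fixed number $\sqrt K$.

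First I would record the parameters. The number of codewords is $N=\prod_{i=1}^{k}(q^{m_i}-1)+K$ by the generic construction (\ref{eqn-mycodebook}), and $K=|\widetilde S|=\frac1q\bigl(\prod_{i=1}^k(q^{m_i}-1)+(-1)^{k+1}\bigr)$ by Lemma~\ref{lem-value}(1), giving exactly the stated $(N,K)$. Then, exactly as in Theorem~\ref{th-codebook1}, I would split the correlation computation into the three cases according to whether $\textbf c',\textbf c''$ lie in $\mathcal E_K$ or in $\mathcal F=\mathcal C\backslash\mathcal E_K$. When both lie in $\mathcal E_K$ the inner product is $0$. When one lies in $\mathcal E_K$ and the other in $\mathcal F$, the correlation is a single entry divided by $\sqrt K$, i.e. of modulus $1/\sqrt K$, which is dominated by the main-case value for the relevant parameter ranges. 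The substantive case is $\textbf c',\textbf c''\in\mathcal F$: writing $\lambda_i=\lambda_i'\overline{\lambda_i''}$ and using that $n_{\bar{\textbf c}}=K$ is constant, the inner product collapses to
\begin{eqnarray*}
\textbf c'\textbf c''^{H}=\frac1K\,\widetilde J_{a}(\lambda_1,\ldots,\lambda_k),
\end{eqnarray*}
so the whole problem reduces to bounding $|\widetilde J_{a}(\lambda_1,\ldots,\lambda_k)|$ via Theorem~\ref{th-main2}.

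Next I would invoke Theorem~\ref{th-main2} to enumerate the possible moduli of $\widetilde J_a$. Since $(\lambda_1',\ldots,\lambda_k')\neq(\lambda_1'',\ldots,\lambda_k'')$, not all $\lambda_i$ are trivial, so case~(1) of Theorem~\ref{th-main2} is excluded and the modulus lies among the square-root powers of $q$ given in cases (2)--(5). The largest such value occurs in case~(5), namely $q^{(m_1+\cdots+m_k-1)/2}$, since every partial sum $m_1+\cdots+m_h$ with $h\le k$ is at most $m_1+\cdots+m_k$. Dividing by $K$ and substituting $K=\frac1q\bigl(\prod_{i=1}^k(q^{m_i}-1)+(-1)^{k+1}\bigr)$ yields the candidate maximum
\begin{eqnarray*}
\frac{q^{(m_1+\cdots+m_k-1)/2}}{K}=\frac{q^{(m_1+\cdots+m_k+1)/2}}{\prod_{i=1}^{k}(q^{m_i}-1)+(-1)^{k+1}},
\end{eqnarray*}
which is exactly the claimed $I_{\max}(\mathcal C)$.

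Finally, to conclude $I_{\max}(\mathcal C)$ equals this value rather than merely being bounded by it, I must exhibit a genuine pair $\textbf c',\textbf c''\in\mathcal F$ attaining case~(5) of Theorem~\ref{th-main2}: that is, all of $\lambda_1',\ldots,\lambda_k'$ and all of $\lambda_1'',\ldots,\lambda_k''$ nontrivial, all $\lambda_i=\lambda_i'\overline{\lambda_i''}$ nontrivial, and $t_1+\cdots+t_k\not\equiv0\pmod{q-1}$. This existence argument is the main obstacle, but it is handled by the same explicit parity-based construction used at the end of Theorem~\ref{th-codebook1}: choosing an $s$ with $0<s$ and $s+1\le\min_i\{q^{m_i}-2\}$ and alternating exponents $s,s+1$ between the two tuples forces each $\lambda_i=\phi_i^{\pm1}$ to be nontrivial while making $t_1+\cdots+t_k\equiv-1$ or $-2\pmod{q-1}$, which is nonzero once $q$ is large enough (the construction requires $q\ge4$, matching the hypothesis used there). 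I would then note that this attained value dominates the $1/\sqrt K$ contribution from the mixed $\mathcal E_K$--$\mathcal F$ case, so that it is the global maximum, completing the identification of $I_{\max}(\mathcal C)$.
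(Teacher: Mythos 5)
Your proposal is correct and follows essentially the same route as the paper's proof: the same three-case split over $\mathcal E_K$ and $\mathcal F=\mathcal C\backslash\mathcal E_K$, the same reduction of the $\mathcal F$--$\mathcal F$ case to $\frac1K\widetilde J_a(\lambda_1,\ldots,\lambda_k)$ using the constant norm $n_{\bar{\textbf c}}=K$, and the same appeal to Theorem \ref{th-main2} to enumerate the possible correlation moduli. If anything you are more careful than the paper, whose proof of Theorem \ref{th-mycodbook2} stops after listing the possible values and ``summarizing,'' never exhibiting a pair that attains $q^{(m_1+\cdots+m_k-1)/2}/K$ nor checking that this value dominates $1/\sqrt K$; your explicit reuse of the attainment construction from Theorem \ref{th-codebook1} fills exactly that gap, and it also makes visible that this step needs $q\ge4$, a hypothesis stated in Theorem \ref{th-codebook1} but silently omitted from the statement of Theorem \ref{th-mycodbook2}.
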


\begin{proof}
Let $\textbf{c}',\textbf{c}''\in \mathcal{C}$ be any two different codewords. Denote $\mathcal{F}=\mathcal{C}\backslash \mathcal{E}_{K}$. Now we calculate the correlation of $\textbf{c}'$ and $\textbf{c}''$ in the following cases.
\begin{enumerate}
\item[(1)] If $\textbf{c}',\textbf{c}''\in \mathcal{E}_{K}$, we directly have $|\textbf{c}'\textbf{c}''^{H}|=0$.
\item[(2)] If $\textbf{c}'\in \mathcal{F},\textbf{c}''\in \mathcal{E}_{K}$ or $\textbf{c}'\in \mathcal{E}_{K},\textbf{c}''\in \mathcal{F}$, we have
$$|\textbf{c}'\textbf{c}''^{H}|=\frac{1}{\sqrt{K}}
\left|\lambda_1(c_1)\cdots\lambda_k(c_k)\right|=\frac{1}{\sqrt{K}}$$ for some $(c_1,\ldots,c_k)\in \widetilde{\mathcal{S}}$ and $\lambda_i\in\widehat{ \Bbb F}_{q^{m_i}}^{*}$, $i=1,2,\ldots,k$.
\item[(3)] If $\textbf{c}',\textbf{c}''\in \mathcal{F}$, we assume that $\textbf{c}'=\textbf{c}(\lambda_1',\ldots,\lambda_k')$ and $\textbf{c}''=\textbf{c}(\lambda_1'',\ldots,\lambda_k'')$ with $(\lambda_{1}',\ldots,\lambda_{k}')\neq(\lambda_{1}'',\ldots,\lambda_{k}'')$ where $\lambda_i',\lambda_{i}''\in \widehat{\Bbb F}_{q^{m_i}}^{*}$ for $i=1,2,\ldots,k$.  Denote $\lambda_{i}=\lambda_i'\overline{\lambda_i''}$ for all $i=1,2,\ldots,k$. Then by Equation (\ref{eqn-mycodebook}) we have
    \begin{eqnarray*}\textbf{c}'\textbf{c}''^{H}
    &=&\frac{1}{K}\sum_{\substack{(c_1,\ldots,c_k)\in \widetilde{S}}}\lambda_{1}(c_1)\cdots \lambda_k(c_k)\\
    &=&\frac{1}{K}\widetilde{J}_{a}(\lambda_{1},\ldots,\lambda_{k}).\end{eqnarray*}
    Since $(\lambda_{1}',\ldots,\lambda_{k}')\neq(\lambda_{1}'',\ldots,\lambda_{k}'')$, not all of $\lambda_{i},i=1,2,\cdots,k$ are trivial characters. Hence, by Theorem \ref{th-main2}, we have
    $$|\textbf{c}'\textbf{c}''^{H}|\in \left\{\frac{q^{\frac{m_1+\cdots+m_h-2}{2}}}{K}:1\leq h \leq k\right\}\bigcup \left\{\frac{q^{\frac{m_1+\cdots+m_h-1}{2}}}{K}:1\leq h \leq k\right\}.$$
\end{enumerate}
Summarizing the conclusions in the three cases above, we obtain
$$I_{\max}(\mathcal{C})=\frac{q^{\frac{m_1+\cdots+m_k+1}{2}}}{\prod_{i=1}^{k}(q^{m_i}-1)+(-1)^{k+1}}.$$
\end{proof}
Theorem \ref{th-mycodbook2} contains Theorem 15 in \cite{HDY} as a special case. If not all of $m_i,i=1,\cdots,k$, are equal to 1, then the parameters of $\mathcal{C}$ in Theorem \ref{th-mycodbook2} are different to those of the codebook in Theorem 15 of \cite{HDY}.

\begin{cor}
For the codebook $\mathcal{C}$ in Theorem \ref{th-mycodbook2}, the followings hold.
\begin{enumerate}
\item If $k=1$ and $m_1=2$, or $k=2$ and $m_1=m_2=1$, then $\mathcal{C}$ is nearly optimal according to the Levenshtein bound.
\item In other cases, then $\mathcal{C}$ is nearly optimal according to the Welch bound.
\end{enumerate}
\end{cor}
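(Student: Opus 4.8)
The plan is to classify the two codebooks according to whether the Welch bound or the Levenshtein bound is the relevant comparison, which is dictated entirely by the relationship between $N$ and $K^2$. For the codebook $\mathcal{C}$ of Theorem \ref{th-mycodbook2} we have $I_{\max}(\mathcal{C})=\frac{q^{(m_1+\cdots+m_k+1)/2}}{\prod_{i=1}^{k}(q^{m_i}-1)+(-1)^{k+1}}$, together with
$$K=\frac{1}{q}\left(\prod_{i=1}^{k}(q^{m_i}-1)+(-1)^{k+1}\right)\quad\text{and}\quad N=\prod_{i=1}^{k}(q^{m_i}-1)+K.$$
First I would compute, for each case in the statement, the leading-order asymptotics in $q$ of $N$, $K$ and $N-K^2$ (or $N-K(K+1)/2$ in the real case), since the sign of $N-K^2$ decides which bound is tight. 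The key observation is that $K\approx q^{m_1+\cdots+m_k-1}$ to leading order and $N\approx q^{m_1+\cdots+m_k}$, so $K^2\approx q^{2(m_1+\cdots+m_k)-2}$. Comparing the exponents $m_1+\cdots+m_k$ and $2(m_1+\cdots+m_k)-2$ shows that $N>K^2$ precisely when $m_1+\cdots+m_k<2$, i.e. in the two borderline configurations $k=1,m_1=2$ and $k=2,m_1=m_2=1$ where $m_1+\cdots+m_k=2$; in every other case the total $m_1+\cdots+m_k\geq 3$ forces $N<K^2$ for large $q$, placing us in the Welch regime.

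Next I would handle the two Levenshtein cases by direct substitution. For $k=1,m_1=2$ the parameters become $N=q^2+q-1$, $K=q$, so $N>K^2$ and I would use the complex Levenshtein bound \eqref{eqn-I.3}; plugging in $I_{\max}(\mathcal{C})=1/\sqrt{q}$ and computing $I_L=\sqrt{(2N-K^2-K)/((N-K)(K+1))}$ explicitly, I would form the ratio $I_{\max}/I_L$ and verify that it tends to $1$ as $q\to\infty$. For $k=2,m_1=m_2=1$ one gets $K=q-2+\tfrac{1}{q}$, which is only an integer in trivial cases, so I would treat this configuration exactly as in Theorem 15 of \cite{HDY} (of which it is a special case): here $K=q-1$ after accounting for signs, $N=(q-1)^2+(q-1)$, and again $N>K^2$, so the complex Levenshtein bound applies and the same ratio computation gives limit $1$.

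For all remaining cases I would invoke Lemma \ref{lem-I.1}: with $N<K^2$ the Welch bound is the sharp benchmark, and
$$I_W=\sqrt{\frac{N-K}{(N-1)K}}.$$
Substituting $K=\frac{1}{q}(\prod_{i=1}^{k}(q^{m_i}-1)+(-1)^{k+1})$ and $N=\prod_{i=1}^{k}(q^{m_i}-1)+K$, the dominant terms give $I_W\sim q^{(m_1+\cdots+m_k+1)/2}/\prod_{i=1}^{k}(q^{m_i}-1)$ to leading order, which matches $I_{\max}(\mathcal{C})$ term for term; forming $I_{\max}/I_W$ and letting $q\to\infty$ yields the limit $1$, establishing near-optimality with respect to the Welch bound. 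The main obstacle I anticipate is bookkeeping rather than conceptual: one must track the $(-1)^{k+1}$ correction and the exact value of $K$ carefully, since rounding issues (the requirement that $K$ be a genuine integer) constrain which $(q,m_i)$ actually occur, and one must confirm that the borderline total $m_1+\cdots+m_k=2$ really does produce $N>K^2$ for all admissible $q$ rather than only asymptotically. Beyond that, the limit computations are routine rational-function asymptotics, entirely parallel to those already carried out in Corollary \ref{cor} for the codebook of Theorem \ref{th-codebook1}.
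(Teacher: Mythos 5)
Your overall strategy coincides with the paper's (whose proof of this corollary is literally ``similar to Corollary \ref{cor}''): decide between the Welch and Levenshtein regimes by comparing $N$ with $K^2$, then compute the limit of the ratio. Your handling of the case $k=1,m_1=2$ and of the Welch regime is correct. However, your second Levenshtein case contains a genuine computational error. For $k=2$, $m_1=m_2=1$ we have $(-1)^{k+1}=-1$, so Lemma \ref{lem-value} gives
\[
K=\frac{(q-1)^2-1}{q}=q-2,
\]
an integer for every $q$ --- as it must be, since $K=|\widetilde{S}|$ is the cardinality of a set. Your value $K=q-2+\tfrac{1}{q}$ rests on the wrong sign of $(-1)^{k+1}$, and the ensuing worry that integrality of $K$ ``constrains which $(q,m_i)$ actually occur'' is a red herring: no such constraint exists anywhere in the construction. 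The attempted repair is also wrong: the parameters are not $K=q-1$, $N=(q-1)^2+(q-1)$ (neither for this codebook nor for Theorem 15 of \cite{HDY}), but
\[
N=(q-1)^2+(q-2)=q^2-q-1,\qquad K=q-2,\qquad I_{\max}(\mathcal{C})=\frac{q^{3/2}}{(q-1)^2-1}=\frac{\sqrt{q}}{q-2}.
\]
With the correct values one gets $N-K^2=3q-5>0$, so the complex Levenshtein bound does apply for every admissible $q$, and
\[
\frac{I_{\max}(\mathcal{C})}{I_L}=\frac{\sqrt{q}}{q-2}\,\sqrt{\frac{(q-1)^3}{q^2+q-4}}\longrightarrow 1,
\]
so the conclusion survives, but only after redoing the computation you actually wrote down.

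A smaller point: your criterion ``$N>K^2$ precisely when $m_1+\cdots+m_k<2$'' contradicts your own list of Levenshtein cases, which have $m_1+\cdots+m_k=2$. At total degree $M:=m_1+\cdots+m_k=2$ the leading exponents of $N\approx q^{M}$ and $K^2\approx q^{2M-2}$ tie, so the comparison must be settled by lower-order terms --- which is exactly what direct substitution does ($N-K^2=q-1$ in the first case and $3q-5$ in the second). The criterion should read: $M\geq 3$ forces $N<K^2$ for large $q$ (Welch regime), while the two $M=2$ configurations are checked by hand and both satisfy $N>K^2$ for all admissible $q$.
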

\begin{proof}
The proof is similar to that in Corollary \ref{cor} and we omit the details here.
\end{proof}

\begin{rem}
In Theorem \ref{th-mycodbook2}, let $k=2$ and $m_1=1,m_2=2$. Then $\mathcal{C}$ has parameters $(q^3-2q,q^2-q-1)$ and $$I_{\max}(\mathcal{C})=\frac{q^2}{(q-1)(q^2-1)-1}.$$ For a $(q^3-2q,q^2-q-1)$ codebook, Lemma \ref{lem-I.1} implies that $$I_W=\frac{q-1}{q^2-q-1}.$$
In Table II, we list the parameters of some new specific codebooks for $k=2$ and $m_1=1,m_2=2$. From this table, we know that $I_{\max}$ becomes very small for large enough $q$. It can been seen that $I_{\max}$ is very close to $I_W$ given by the Welch bound for large enough $q$, which means that our codebooks are indeed nearly optimal. In particular, the larger the value of $q$ is, the smaller the difference between $I_W/I_{\max}$ and 1 is. These demonstrate that our codebooks should have a good applicability in communications.

\[ \begin{tabular} {c} Table II. Some new nearly optimal codebooks for $k=2$\\ and $m_1=1,m_2=2$ in Theorem \ref{th-mycodbook2}\\
{\begin{tabular}{cccccc}
  \hline\hline
 $q$ & $N$ & $K$ &  $I_{\max}$ & $I_W$ & $I_W/I_{\max}$\\
  \hline
  $4$ & $56$ & $11$ &  $0.363636$ & $0.272727$ & $0.750000$\\
  $5$ & $115$ & $19$ & $0.263158$ & $0.210526$ & $0.799998$\\
  $7$ & $329$ & $41$ & $0.170732$ & $0.146341$ & $0.857139$\\
  $9$ & $711$ & $71$ & $0.126761$ & $0.112676$ & $0.888885$\\
  $11$ & $1309$ & $109$ & $0.100917$ & $0.091743$ & $0.909094$\\
  $13$ & $2171$ & $155$ & $0.083871$ & $0.077419$ & $0.923072$\\
  $16$ & $4064$ & $239$ & $0.066946$ & $0.062762$ & $0.937502$\\
  $23$ & $12121$ & $505$ & $0.045545$ & $0.043564$ & $0.956505$\\
  $49$ & $117551$ & $2351$ & $0.020842$ & $0.020417$ & $0.979608$\\
  $81$ & $531279$ & $6479$ & $0.012502$ & $0.012348$ & $0.987682$\\
  $121$ & $1771319$ & $14519$ & $0.008334$ & $0.008265$ & $0.991721$\\
  \hline\hline
\end{tabular}}
\end{tabular}
\]
\end{rem}

\section{Conclusions and remarks}
This paper gave two classes of nearly optimal codebooks based on generalized Jacobi sums and related character sums. The main contributions are the following:
\begin{enumerate}
\item[$\bullet$] We generalized the classical Jacobi sums over finite fields and defined the so-called generalized Jacobi sums. The absolute values of the generalized Jacobi sums were investigated. Besides, some related characters sums derived from generalized Jacobi sums were also studied.
\item[$\bullet$] We obtained a class of nearly optimal codebooks in Theorem \ref{th-codebook1} based on the generalized Jacobi sums. This result contains that in \cite[Theorem 19]{HDY} as a special case.
\item[$\bullet$] We obtained a class of nearly optimal codebooks in Theorem \ref{th-mycodbook2} based on the character sums related to the generalized Jacobi sums. This result contains that in \cite[Theorem 15]{HDY} as a special case.
\end{enumerate}

As pointed out in \cite{LHS}, constructing optimal codebooks with minimal
$I_{\max}$ is very difficult in general. This problem is equivalent
to line packing in Grassmannian spaces \cite{CHS}. In frame theory, such a codebook
with $I_{\max}$ minimized is referred to as a Grassmannian frame \cite{SH}.
The codebooks presented in this paper should have
applications in these areas.
With the framework developed in \cite{LG}, our codebooks can be used to obtain deterministic sensing matrices with small coherence for compressed sensing.

It is natural to consider to generalize the generalized Jacobi sums in this paper to some special rings such as Galois rings. The reader is invited to make further progress in this direction.

\section*{Acknowledgements}
 The author is very grateful to the two reviewers for their valuable comments and suggestions that much improved the quality of this paper.


\end{document}